\newcommand{\repo}{\url{https://github.com/nicolaprezza/bwt2lcp}\xspace}
\newcommand{\eGap}{\texttt{eGap}\xspace}
\title{Space-Efficient Computation of the LCP Array from the Burrows-Wheeler Transform}
\titlerunning{Inducing the LCP from the BWT}
\author{Nicola Prezza}{Department of Computer Science, University of Pisa, Italy}{nicola.prezza@di.unipi.it}{0000-0003-3553-4953
}{}
\author{Giovanna Rosone}
{Department of Computer Science, University of Pisa, Italy}{giovanna.rosone@unipi.it}{0000-0001-5075-1214}{}
\authorrunning{N. Prezza and G. Rosone}
\subjclass{Theory of computation $\rightarrow$ Design and analysis of algorithms $\rightarrow$ Data structures design and analysis}
\keywords{Burrows-Wheeler Transform, LCP array}
\def\mergeBwtLCP{\texttt{merge}\xspace}
\def\induceLCP{\texttt{bwt2lcp}\xspace}
\begin{document}
	
	\maketitle
	
	\begin{abstract}
		
		We show that the Longest Common Prefix Array of a text collection of total size $n$ on alphabet $[1,\sigma]$ can be computed from the Burrows-Wheeler transformed collection in $O(n\log\sigma)$ time using $o(n\log\sigma)$ bits of working space on top of the input and output. 
		Our result improves (on small alphabets) and generalizes (to string collections) the previous solution from Beller et al., which required $O(n)$ bits of extra working space. 
		We also show how to merge the BWTs of two collections of total size $n$ within the same time and space bounds. The procedure at the core of our algorithms can be used to enumerate suffix tree intervals in succinct space from the BWT, which is of independent interest.
		An engineered implementation of our first algorithm on DNA alphabet induces the LCP of a large (16 GiB) collection of short (100 bases) reads at a rate of $2.92$ megabases per second using in total $1.5$ Bytes per base in RAM. 
		Our second algorithm merges the BWTs of two short-reads collections of 8 GiB each at a rate of $1.7$ megabases per second and uses $0.625$ Bytes per base in RAM. An extension of this algorithm that computes also the LCP array of the merged collection processes the data at a rate of $1.48$ megabases per second and uses $1.625$ Bytes per base in RAM. 
	\end{abstract}
	
	\section{Introduction}
	The increasingly-growing production of huge datasets composed of short strings---especially in domains such as biology, where new generation sequencing technologies can nowadays generate Gigabytes of data in few hours---is lately generating much interest towards fast and space-efficient algorithms able to index this data. The Burrows-Wheeler Transform~\cite{burrows1994block} and its extension to sets of strings~\cite{MantaciRRS07,BauerCoxRosoneTCS2013} is becoming the gold-standard in the field: even when not compressed, its size is one order of magnitude smaller than classic suffix arrays (while preserving many of their indexing capabilities). 
	The functionalities of this transformation can be extended by computing additional structures such as the LCP array \cite{CGRS_JDA_2016}; see, e.g. \cite{prezza2018detecting} for a bioinformatics application where this component is needed. To date, several practical algorithms have been developed to solve the task of merging or building \emph{de novo} such components~\cite{BauerCoxRosoneTCS2013,CGRS_JDA_2016,holt2014constructing,holt2014merging,Bonizzoni2018,egidiLouzaManziniTellesWABI2018}, but little work has been devoted to the task of computing the LCP array from the BWT of string collections in little space (internal and external working space). The only existing work we are aware of in this direction is from Beller et al.~\cite{beller2013computing}, who show how to build the LCP array from the BWT of a single text in $O(n\log\sigma)$ time and $O(n)$ bits of working space on top of the input and output. In this paper, we combine their algorithm with a recent suffix-tree enumeration procedure of Belazzougui~\cite{belazzougui2014linear} and reduce this working space to $o(n\log\sigma)$ while also generalizing the algorithm to string collections. As a by-product, we show an algorithm able to merge the BWTs of two string collections using just $o(n\log\sigma)$ bits of working space. An efficient implementation of our algorithms on DNA alphabet uses (in RAM) as few as $n$ bits on top of a succinct representation of the input/output, and can process data as fast as $2.92$ megabases per second.
	

\section{Our Contributions}

	Our work builds upon the following two results from Belazzougui\footnote{While the original theorem~\cite[Sec. 5.1]{belazzougui2014linear} is general and uses the underlying rank data structure as a black box, in our case we strive for succinct space (not compact as in~\cite{belazzougui2014linear}) and stick to wavelet trees. All details on how to achieve the claimed running time and space are described in Section \ref{sec:belazzougui}.}\cite{belazzougui2014linear} and Beller et al.~\cite{beller2013computing}. 
	For space reasons, the notation used in the paper is reported in Appendix \ref{sec:notation}.	
	
	\begin{theorem}[Belazzougui \cite{belazzougui2014linear}]\label{th:Belazzougui}
		Given the Burrows-Wheeler Transform of a text $T\in[1,\sigma]^n$ represented with a wavelet tree, we can solve the following problem in $O(n\log\sigma)$ time using $O(\sigma^2\log^2n)$ bits of working space on top of the BWT. Enumerate the following information for each distinct right-maximal substring $W$ of $T$:
		\begin{itemize}
			\item $|W|$.
			\item $range(Wc_i)$ for all $c_1 < \dots < c_k$ such that $Wc_i$ occurs in $T$.
		\end{itemize}
	\end{theorem}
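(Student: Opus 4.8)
The plan is to identify the right-maximal substrings of $T$ with exactly the internal nodes of the suffix tree of $T$, and to enumerate them by traversing the \emph{suffix-link tree} (equivalently, the tree of explicit Weiner links) top-down from the root $\epsilon$. The conceptual key is that this traversal can be driven using \emph{only} backward search on the wavelet tree of the BWT, and never a forward extension: if I maintain, for a right-maximal $W$, both $range(W)$ and the intervals $range(Wc_1),\dots,range(Wc_k)$ of its right-extensions, then for any left-extension character $b$ I obtain the analogous data for $bW$ by applying one backward-search step (an LF-style computation using $C$ and $\rank_b$) to $range(W)$ and to each $range(Wc_i)$. The images $range(bWc_i)$ that come out empty are discarded; the surviving non-empty ones are exactly the right-extensions of $bW$, and $bW$ is itself right-maximal precisely when at least two of them survive.

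Concretely, I would initialize the root with $range(\epsilon)=[1,n]$ and its right-extension intervals $range(c)=[C[c]+1,C[c+1]]$ read off the $C$-array. To expand a right-maximal node $W$, I first list the distinct characters in $BWT[range(W)]$ together with their backward-search images; on a wavelet tree this interval-symbols query costs $O(d\log\sigma)$ for $d$ distinct characters and simultaneously yields each $range(bW)$. For every such $b$ I then propagate the right-extension intervals as above, retain those $bW$ with $\ge 2$ surviving children as new right-maximal nodes, and push them for later expansion. Correctness rests on the classical fact that the suffix link of an internal node is again an internal node, so the right-maximal strings form a tree rooted at $\epsilon$ in which every Weiner-link edge lengthens the string by exactly one character; hence each right-maximal $W$ is reached exactly once and its depth in the traversal equals $|W|$, giving the first output value for free.

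For the running time I would argue that the number of right-maximal substrings is at most $n$ and that the total number of reported child-intervals is the number of suffix-tree edges, again $O(n)$. The delicate charging is the rank work: performing the character enumeration on the child intervals rather than on $range(W)$ makes the total number of rank queries proportional to $\sum_v \lambda(v)$, where $\lambda(v)$ is the number of distinct symbols in $BWT[range(v)]$, i.e. the number of Weiner links of $v$. This sum is $O(n)$, since the total number of Weiner links of a suffix tree of a length-$n$ string is linear (it equals, up to reversal, the number of transitions of the associated suffix automaton, which is bounded by $3n$); a long ancestor chain witnessing many Weiner links of a fixed symbol forces the corresponding string to occur proportionally often, so the cost is repaid against the text length. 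Granting this, each of the $O(n)$ charged operations costs $O(\log\sigma)$ on the wavelet tree, for $O(n\log\sigma)$ time overall.

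The space bound $O(\sigma^2\log^2 n)$ bits is where I expect the principal obstacle. A single node needs only its $O(\sigma)$ right-extension boundaries, i.e. $O(\sigma\log n)$ bits, but a naive DFS/BFS stack could hold $\Theta(n)$ frames along a deep root-to-node path. I would therefore schedule the expansion so that the set of simultaneously stored frames stays within $O(\sigma\log n)$ — for instance a level-wise expansion maintaining only a bounded frontier — yielding working space $O(\sigma\log n)\cdot O(\sigma\log n)=O(\sigma^2\log^2 n)$ bits on top of the wavelet tree. The remaining, routine-but-careful task is to check that this schedule does not inflate the time bound and that the three wavelet-tree primitives ($\rank$, backward search, and interval-symbols) all attain their stated costs in succinct space; that full accounting is deferred to Section~\ref{sec:belazzougui}.
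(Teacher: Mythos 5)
Your overall route is essentially the paper's: traverse the Weiner-link tree top-down from $\epsilon$, represent each right-maximal $W$ by its collection of child intervals $range(Wc_1),\dots,range(Wc_k)$, obtain left-extensions by backward-search steps applied to those child intervals (discarding empty images, keeping $bW$ when at least two survive), and charge the wavelet-tree work to the linearly many explicit and implicit Weiner links. This matches the paper's representation $\mathtt{repr(W)}$ and its $O(n\log\sigma)$ time accounting (Section~\ref{sec:belazzougui} and Appendix~\ref{app:belazzougui}), so the correctness and time parts of your argument are sound.

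The genuine gap is exactly where you flag it: the $O(\sigma^2\log^2 n)$-bit space bound, and your proposed fix does not work. A ``level-wise expansion maintaining only a bounded frontier'' is a BFS, and its frontier cannot be bounded: the frontier at depth $d$ is the set of all right-maximal strings of length $d$, which can be $\Theta(n)$ (right-maximal substrings can concentrate at a single length, roughly $\log_\sigma n$, as in de Bruijn-like texts) --- this is precisely why Beller et al.'s queue-based variant needs $\Theta(n)$ bits and is only used on large alphabets. A plain DFS fails too, since the stack depth equals the Weiner-link-tree height, which can be $\Theta(n)$. The missing idea, which is the crux of Belazzougui's space bound, is a DFS with an ordering discipline: when a node is expanded, its right-maximal left-extensions are pushed on the stack in \emph{decreasing order of interval length}, so the child with the smallest interval is expanded first. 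Because each occurrence of $W$ is preceded by exactly one character, the intervals of the left-extensions $bW$ have total length at most $|range(W)|$; hence when there are $k\geq 2$ of them, the smallest has length at most $|range(W)|/2$. Consequently the nodes whose child-groups simultaneously sit on the stack have geometrically decreasing interval lengths, so the stack holds $O(\log n)$ groups of at most $\sigma$ frames each, i.e.\ $O(\sigma\log n)$ frames of $O(\sigma\log n)$ bits each, giving the claimed $O(\sigma^2\log^2 n)$ bits. Without this ordering argument the stated space bound is not established.
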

	
	\begin{theorem}[Beller et al.\cite{beller2013computing}]\label{th:Beller}
		Given the Burrows-Wheeler Transform of a text $T$ represented with a wavelet tree, we can compute the LCP array of $T$ in $O(n\log\sigma)$ time using $4n$ bits of working space on top of the BWT and the LCP.
	\end{theorem}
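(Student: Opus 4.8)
The plan is to reconstruct the LCP array by a breadth-first exploration of the suffix-array intervals of all substrings of $T$, grouped by length, where each interval is read off the BWT by backward search. The starting point is a partition characterisation of LCP values. For $\ell \ge 0$ let the \emph{$\ell$-intervals} be the ranges $range(W)$ of the length-$\ell$ substrings $W$ of $T$; these ranges partition the positions $[1,n]$, and the partition induced by the $(\ell+1)$-intervals refines the one induced by the $\ell$-intervals, since two suffixes sharing a prefix of length $\ell+1$ also share one of length $\ell$. Consequently, for $2 \le i \le n$ we have $LCP[i]=\ell$ if and only if $i$ is interior to the $\ell$-interval partition (so $SA[i-1]$ and $SA[i]$ agree on $\ell$ characters) yet is a boundary of the $(\ell+1)$-interval partition (so they disagree on the $(\ell+1)$-th). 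I would first prove this equivalence directly from the definition of suffix-array order.

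The characterisation immediately suggests the algorithm. Keep a bitvector $B[1,n]$ flagging the LCP entries already computed, and a representation of the current \emph{frontier}, i.e.\ the set of $\ell$-intervals at the current length $\ell$, starting from $\ell=0$ with the single interval $[1,n]$. To pass from level $\ell$ to level $\ell+1$, expand each frontier interval $range(W)=[l,r]$ by backward search: enumerating with the wavelet tree the distinct characters $c$ occurring in $BWT[l,r]$ together with the ranks needed by the LF-step yields all intervals $range(cW)$, which are exactly the $(\ell+1)$-intervals. For every boundary $p$ of a freshly produced $(\ell+1)$-interval with $B[p]=0$, set $LCP[p]\leftarrow \ell$ and $B[p]\leftarrow 1$. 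The assigned value is correct by the characterisation: $B[p]=0$ certifies that $p$ was interior to every shorter-level partition, hence $LCP[p]\ge \ell$, while $p$ being an $(\ell+1)$-boundary forces $LCP[p]\le \ell$.

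Run verbatim this procedure is correct but too slow, since the number of distinct substrings, and hence of intervals expanded over all levels, can be quadratic. The efficiency fix is to expand a produced interval at the next level only when creating it defined a previously-undefined LCP entry; an interval all of whose boundaries are already set is discarded. Mapping each expanded interval to one entry it newly wrote (say its right boundary) is injective, since distinct active intervals are disjoint, so at most $O(n)$ intervals are ever expanded; a global amortized argument then bounds the total number of intervals produced across all levels by $O(n)$ as well. As each \texttt{getIntervals} call and each reported symbol cost $O(\log\sigma)$, the total time is $O(n\log\sigma)$. I expect the soundness of this pruning to be the main obstacle: one must verify that discarding ``already resolved'' intervals never prevents a later, deeper interval from being generated, so that every LCP entry is still written exactly once. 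This is the delicate point, because the backward extensions $range(cW)$ of an interval do not lie inside $range(W)$ but are scattered across the array, so both the completeness of the traversal and the charging of the work to LCP entries must be argued globally rather than interval-by-interval.

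Finally, for the space bound I would avoid an explicit queue of $O(n)$ integer pairs, which would cost $\Theta(n\log n)$ bits. Since the intervals active at any single level are pairwise disjoint, the whole frontier is encoded by marking their endpoints in length-$n$ bitvectors; maintaining $B$ together with the current and next frontiers as a constant number of length-$(n+1)$ bitvectors, equipped with \rank/\sel support so that intervals can be scanned and boundaries propagated in linear time, keeps the working space within $4n$ bits on top of the BWT and the output LCP, as claimed.
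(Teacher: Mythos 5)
Your algorithm is essentially Beller et al.'s --- the same pruned breadth-first expansion of suffix-array intervals via wavelet-tree left-extensions, pushing an interval only when it writes a fresh LCP entry --- but two points break the claimed bounds as you have set them up. The concrete one is your frontier representation. You encode \emph{every} level as length-$n$ bitvectors with \rank/\sel support. The number of levels is the maximum LCP value plus one, which can be $\Theta(n)$ (take $T=a^{n-1}\#$: every level contains exactly one active interval), and each level transition forces you either to scan bitvectors of length $n$, costing $\Omega(n/\log n)$ per level even with word-packing (hence $\Omega(n^2/\log n)$ overall), or to rebuild/maintain \sel support under updates, which costs $\Omega(n/\log n)$ per rebuild or $\Theta(\log n)$ per dynamic update, giving $O(n\log n)$ --- all of which exceed $O(n\log\sigma)$ for small $\sigma$. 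This is exactly why Beller et al.\ (and the paper, Appendix B) use a \emph{hybrid} queue: a level is kept as a plain vector of pairs as long as it has at most $n/\log n$ elements, and switches to the bitvector encoding only beyond that threshold; since the total number of pushed intervals is $O(n)$, at most $O(\log n)$ levels ever overflow, so the $O(t+n/\log n)$-time bitvector scans sum to $O(n)$. Relatedly, your separate bitvector $B$ is both redundant and fatal to the $4n$-bit claim: $B$ plus two boundary bitvectors for each of the two live levels makes $5n$ bits. The standard accounting instead initializes the output LCP entries to an undefined value $\bot$ and tests that (legitimate, since working space is measured on top of the BWT \emph{and} the LCP), leaving exactly four bitvectors, i.e.\ $4n$ bits.

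Second, the point you flag as ``the delicate point'' --- that pruning never loses an interval needed later --- is indeed the crux of correctness, and your proposal leaves it unproved. The argument that closes it is an induction on the level. Suppose $LCP[i]=\ell+1$ is the correct value and let $abV$, with $|abV|=\ell+2$, be the left-extension whose range ends at $i-1$. One first shows $LCP[\mathtt{right}(bV)+1]$ equals \emph{exactly} $\ell$ (it cannot be larger because $|bV|=\ell+1$, nor smaller because of the suffix witnessing $LCP[i]=\ell+1$). Since every value the algorithm writes is correct, and ranges of distinct strings of equal length are disjoint, the \emph{only} event that can set that entry is the creation of $\mathtt{range}(bV)$ itself as an extension of $\mathtt{range}(V)$; and the pruning rule pushes an interval precisely when it sets a fresh entry, so $\mathtt{range}(bV)$ is guaranteed to enter the queue. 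Popping it later produces $\mathtt{range}(abV)$, whose right boundary is $i-1$, and hence $LCP[i]$ gets written. By induction on $\ell$ every entry is eventually set, and each exactly once. Without this (or an equivalent) global argument, the correctness of the pruned traversal --- and hence the theorem --- is not established.
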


	Theorem \ref{th:Beller} represents the state of the art for computing the LCP array from the BWT. 
	Our first observation is that Theorem \ref{th:Belazzougui} can be directly used to induce the LCP array of $T$ using just $O(\sigma^2\log^2n)$ bits of working space on top of the input and output (proof in Section \ref{sec:algo1}). We combine this result with Theorem \ref{th:Beller} and obtain our first theorem:
	
	\begin{theorem}\label{thm:LCP}
		Given the Burrows-Wheeler Transform of a text $T\in[1,\sigma]^n$, we can compute the LCP array of $T$ in $O(n\log\sigma)$ time using $o(n\log\sigma)$ bits of working space on top of the BWT and the LCP.
	\end{theorem}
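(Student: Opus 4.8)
The plan is to obtain the claimed bound by running one of two LCP-construction algorithms depending on the size of the alphabet, and then to verify that a single threshold on $\sigma$ makes both branches fit into $o(n\log\sigma)$ bits. The first branch is the inducing procedure built on Theorem~\ref{th:Belazzougui} (developed in Section~\ref{sec:algo1}); the second is the algorithm of Beller et al.\ (Theorem~\ref{th:Beller}). Both run in $O(n\log\sigma)$ time on the wavelet-tree representation of the BWT, so the running time is immediate in either branch and only the space needs to be controlled.

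For the first branch I would first make precise how Belazzougui's enumeration fills the LCP array. The internal nodes of the suffix tree are exactly the right-maximal substrings $W$, and for such a $W$ with extensions $Wc_1,\dots,Wc_k$ (in increasing character order) the intervals $range(Wc_1),\dots,range(Wc_k)$ are consecutive and partition $range(W)$. At the boundary between $range(Wc_i)$ and $range(Wc_{i+1})$ the two adjacent suffixes agree on the prefix $W$ and then differ (one continues with $c_i$, the other with $c_{i+1}$), so the corresponding LCP entry equals exactly $|W|$; hence, as the enumeration visits $W$ and reports $|W|$ together with all its child intervals, I write $|W|$ into the $k-1$ boundary positions. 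The key correctness claim is that this writes each internal LCP entry exactly once: if $LCP[i]=\ell$ then the common prefix of the two suffixes straddling position $i$ is a right-maximal substring of length $\ell$, and $i$ is the boundary between two of its child intervals, so position $i$ is hit by precisely one node. Summing $k-1$ over all internal nodes equals the number of leaves minus one, i.e.\ the number of internal LCP entries, confirming that the procedure realizes a bijection and leaves no position unfilled or doubly written. This branch therefore computes the full LCP array in $O(n\log\sigma)$ time and $O(\sigma^2\log^2 n)$ bits on top of input and output.

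It remains to combine the two branches. I would fix a threshold, say $\tau=n^{1/3}$, and use the inducing branch when $\sigma\le\tau$ and Theorem~\ref{th:Beller} otherwise. When $\sigma\le n^{1/3}$ the working space is $O(\sigma^2\log^2 n)=O(n^{2/3}\log^2 n)$, and since $\sigma\ge 2$ gives $\log\sigma\ge 1$ we get $n^{2/3}\log^2 n=o(n\log\sigma)$. When $\sigma> n^{1/3}$ the working space is the $4n=O(n)$ bits of Theorem~\ref{th:Beller}, and because $\log\sigma>\frac{1}{3}\log n\to\infty$ we have $n=o(n\log\sigma)$. Thus both branches stay within $o(n\log\sigma)$ bits for every $\sigma$, which yields the theorem.

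The step I expect to be the main obstacle is the correctness of the inducing branch, i.e.\ the exact bijection between LCP positions and (right-maximal substring, consecutive-child) pairs, together with the bookkeeping that the enumeration of Theorem~\ref{th:Belazzougui} may visit nodes in an arbitrary order and therefore writes to scattered positions of the output LCP array; I would need to argue that no position is written twice and that the sentinel and interval-boundary conventions of the BWT do not create off-by-one errors. The threshold calculation, by contrast, is routine once the two space bounds are in hand, and the precise value of $\tau$ is immaterial as long as it separates the two regimes.
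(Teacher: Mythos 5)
Your overall strategy coincides with the paper's: split on the alphabet size, run the Belazzougui-based inducing procedure (Theorem \ref{th:Belazzougui}) on small alphabets and Beller et al.'s algorithm (Theorem \ref{th:Beller}) on large ones. Your threshold $n^{1/3}$ differs from the paper's $\sqrt{n}/\log^2 n$, but both choices make the two space bounds $O(\sigma^2\log^2 n)$ and $O(n)$ fall within $o(n\log\sigma)$ in their respective regimes, so that part is fine; likewise, your correctness sketch for the inducing branch (node-type LCP entries are exactly the boundaries between consecutive child intervals of right-maximal substrings, each written exactly once, with $LCP[0]$ handled separately) is essentially what the paper proves in Lemma \ref{lemma:proof of thm1}.

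There is, however, a genuine gap at the very start: you assume throughout that the BWT is already available ``in its wavelet-tree representation.'' Both Theorem \ref{th:Belazzougui} and Theorem \ref{th:Beller} require that representation as input, but Theorem \ref{thm:LCP} deliberately does not --- compare the three statements: the wavelet-tree qualifier appears in the first two and is absent from the third, whose input is the plain Burrows-Wheeler transform. Your proof therefore also has to construct the wavelet tree from the plain BWT within the stated budget, and this is not free: a standard construction uses $\Theta(n\log\sigma)$ bits of working space, which would by itself exceed the $o(n\log\sigma)$ bound you are trying to prove. The paper devotes half of its proof to exactly this point: it replaces $T$ by its wavelet \emph{matrix}, which by the construction algorithm of~\cite{claude2015wavelet} can be built in $O(n\log\sigma)$ time using only $n$ bits of additional space, and it re-uses the (not yet filled) LCP array's space to accommodate those $n$ bits, so the total working space on top of the BWT and the LCP stays $o(n\log\sigma)$. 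Your proposal needs this step, or an equivalent small-space construction argument, to be complete; the remainder of it is sound.
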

	\begin{proof}
		
		First, we replace $T$ by its wavelet matrix~\cite{claude2015wavelet} --- of size $n\log\sigma + o(n\log\sigma)$ bits --- in $O(n\log\sigma)$ time using just $n$ bits of additional working space as shown in~\cite{claude2015wavelet}. Wavelet matrices support the same set of operations of wavelet trees in the same running times (indeed, they can be considered as a wavelet tree representation). We re-use the space of the LCP array to accommodate the extra $n$ bits required for building the wavelet matrix, so the overall working space does not exceed $o(n\log\sigma)$ bits on top of the BWT and LCP. 
		In the rest of the paper we will simply assume that the input is represented by a wavelet tree.
		
		At this point, if $\sigma < \sqrt{n}/\log^2n$ then $\sigma^2\log^2n = o(n)$ and our extension of Theorem \ref{th:Belazzougui} gives us $o(n\log\sigma)$ additional working space. If $\sigma \geq \sqrt{n}/\log^2n$ then $\log\sigma = \Theta(\log n)$ and we can use Theorem \ref{th:Beller}, which yields extra working space $O(n) = o(n\log n) = o(n\log\sigma)$. 
	\end{proof}
	
	We proceed by extending Theorem \ref{th:Belazzougui} to enumerate also the intervals corresponding to leaves of the generalized suffix tree of a text collection (Theorem \ref{th:Belazzougui} enumerates internal nodes).
	We show that this simple modification, combined again with the strategy of Theorem \ref{th:Beller} (generalized to text collections), can be used to extend Theorem \ref{thm:LCP} to text collections: 
	
	\begin{theorem}\label{thm:LCP collection}
		Given the Burrows-Wheeler Transform of a collection $\mathcal C = \{T_1, \dots, T_m\}$ of total length $n$ on alphabet $[1,\sigma]$, we can compute the LCP array of $\mathcal C$ in $O(n\log\sigma)$ time using $o(n\log\sigma)$ bits of working space on top of the BWT and the LCP.
	\end{theorem}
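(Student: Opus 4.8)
The plan is to mirror the proof of Theorem \ref{thm:LCP} exactly, performing the same case distinction on $\sigma$ and reducing the two regimes to a collection-aware version of the enumeration of Theorem \ref{th:Belazzougui} and a collection-aware version of the algorithm of Theorem \ref{th:Beller}. As a preliminary step I would assume, as in Theorem \ref{thm:LCP}, that the (multi-string) BWT of $\mathcal C$ --- obtained by sorting all suffixes of all $T_i$, with ties among equal terminated suffixes broken by document index --- is stored in a wavelet matrix occupying $n\log\sigma + o(n\log\sigma)$ bits, reusing the output LCP space to host the temporary $n$ bits needed by its construction. The case split is then identical to Theorem \ref{thm:LCP}: if $\sigma < \sqrt n/\log^2 n$ the enumeration route gives $O(\sigma^2\log^2 n)=o(n)=o(n\log\sigma)$ working space, while if $\sigma \ge \sqrt n/\log^2 n$ then $\log\sigma = \Theta(\log n)$ and the generalized Beller et al.\ route gives $O(n)=o(n\log\sigma)$ working space. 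Both routes run in $O(n\log\sigma)$ time.

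For the large-$\sigma$ regime I would generalize Beller et al.'s algorithm, which is in essence a breadth-first visit of the suffix-tree intervals driven by backward search on the wavelet tree, writing $|W|$ into the LCP array at the left boundaries of the child intervals of each visited node $W$. This carries over verbatim to the multi-string BWT, since backward search is unchanged on a collection; the only collection-specific care concerns the terminator column. The first $m$ suffix-array positions correspond to the length-one suffixes $\$$ of the $m$ documents, whose mutual LCP values I would initialize to $0$ (terminators of distinct documents are treated as distinct and do not match), and I would seed the interval queue with the $\sigma$ character-ranges as usual. Since this touches only $O(n)$ bits of bitvectors and queues, exactly as in the single-text case, the $O(n)$-bit bound is preserved.

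For the small-$\sigma$ regime I would use the extension of Theorem \ref{th:Belazzougui} announced just before the statement, which additionally enumerates the intervals corresponding to the \emph{leaves} of the generalized suffix tree together with their string depths. Running the standard induction on the enumerated right-maximal nodes sets the LCP value to $|W|$ at every boundary between consecutive child intervals of each node $W$; for a single text this already fills the whole array, but for a collection it leaves unwritten the boundaries that fall strictly inside a terminator-range $range(W\$)$. Such a range contains more than one element exactly when several documents share the suffix $W$ (the common prefix up to, but excluding, the distinct terminators), and the correct LCP value at every internal boundary of this range is precisely $|W|$. The added leaf enumeration supplies exactly these ranges and their depths, so a final linear pass over the terminator-ranges completes the array. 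The extra information is $O(\sigma^2\log^2 n)=o(n)$ bits and the additional work is $O(n)$ time, so the overall bounds are unchanged.

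The main obstacle I anticipate is the correctness bookkeeping of this leaf pass: one must argue that, once both the right-maximal nodes and the leaf clusters have been processed, every entry of the collection LCP array is written exactly once with its correct value. The key claim is that the terminator region is partitioned without overlap into two kinds of boundaries --- those between two distinct clusters (distinct $W$), which are always captured by a branching internal node, namely their lowest common ancestor $P$ (right-maximal, since $P$ is followed by two distinct symbols, possibly one of them a terminator), and those strictly inside a single cluster, which lie within one child interval and are therefore left untouched by the induction and filled by the leaf enumeration. Establishing this disjoint coverage, and verifying that the extension of Theorem \ref{th:Belazzougui} reports each leaf cluster with the correct depth within the claimed time and space, is where the real care lies; the case split and the Beller et al.\ generalization are then routine.
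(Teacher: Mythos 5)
Your overall plan matches the paper's: split LCP entries into those lying between distinct consecutive suffixes (filled by a node enumeration, with the Belazzougui/Beller case split on $\sigma$) and those lying inside runs of \emph{equal} suffixes $W\#$ shared by several documents (filled by a dedicated leaf enumeration). Your small-$\sigma$ paragraph, including the disjoint-coverage claim via the lowest common ancestor, is essentially the paper's Lemma \ref{lemma:proof of thm1}. However, there is a genuine gap in your large-$\sigma$ regime.

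You claim that for $\sigma \geq \sqrt n/\log^2 n$ the generalized Beller et al.\ algorithm ``carries over verbatim,'' with the only collection-specific care being a $0$-initialization of the first $m$ positions (the terminator block). This is false: the leaf-type problem you correctly diagnose for small $\sigma$ is present in \emph{both} regimes. If a nonempty suffix $W\#$ is shared by several documents, its occurrences occupy a block of consecutive GSA positions, and every pattern range $\mathtt{range}(V)$ contains either all of this block or none of it; hence no range's right boundary falls strictly inside the block, and Beller et al.'s rule ``set $LCP[R+1]$ when popping/extending intervals'' never writes those entries. Your queue-based pass therefore leaves all such positions undefined, not just the ones you patch in $\mathtt{range}(\#)$. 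Moreover, the repair you propose cannot simply be imported from your small-$\sigma$ paragraph: there you fill these entries via the extension of Theorem \ref{th:Belazzougui}, whose working space is $O(\sigma^2\log^2 n)$ bits, which is \emph{not} $o(n\log\sigma)$ when $\sigma$ is large (e.g.\ $\sigma = n^{2/3}$). The paper resolves this by making the leaf enumeration a standalone procedure (Algorithm \ref{alg:fill leaves}): starting from $\mathtt{range}(\#)$, recursively left-extend with characters other than $\#$, and set $LCP[L+1..R]$ to the current depth for each visited range $\langle L,R\rangle$; since each leaf is just an interval plus a depth ($O(\log n)$ bits), this visit can be run with Beller et al.'s $O(n)$-bit queue when $\sigma > n/\log^3 n$ and with a stack otherwise, which is what makes the leaf phase fit in $o(n\log\sigma)$ bits across the whole alphabet range. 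Adding this queue-based leaf pass to your large-$\sigma$ branch (and dropping the claim of verbatim carryover) would close the gap.
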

	
	In~\cite{belazzougui2014linear,belazzougui2016linear}, Belazzougui et al. show that Theorem \ref{th:Belazzougui} can be adapted to merge the BWTs of two texts $T_1, T_2$ and obtain the BWT of the collection $\{T_1, T_2\}$ in $O(nk)$ time and $n\log\sigma(1+1/k) + 11n + o(n)$ bits of working space for any $k \geq 1$~\cite[Thm. 7]{belazzougui2016linear}. We show that our strategy enables a more space-efficient algorithm for the task of merging BWTs of collections.
	The following theorem merges two BWTs by computing the binary Document Array (DA) of their union, i.e. a bitvector telling whether the $i$-th suffix comes from the first or second collection. After that, the merged BWT can be streamed to external memory (the DA tells how to interleave characters from the input BWTs) and does not take additional space in internal memory. Similarly to what we did in the proof of Theorem \ref{thm:LCP}, this time we re-use the space of the Document Array to accommodate the extra $n$ bits needed to replace the BWTs of the two collections with their wavelet matrices.
	
	\begin{theorem}\label{th:merge}
		Given the Burrows-Wheeler Transforms of two collections $\mathcal C_1$ and $\mathcal C_2$ of total length $n$ on alphabet $[1,\sigma]$, we can compute the Document Array 
		of $\mathcal C_1 \cup \mathcal C_2$ in $O(n\log\sigma)$ time using $o(n\log\sigma)$ bits of working space on top of the input BWTs and the output DA.
	\end{theorem}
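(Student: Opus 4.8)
The plan is to run the enumeration procedure underlying Theorem~\ref{th:Belazzougui} --- in the form extended to the leaves of the generalized suffix tree and to string collections, as developed above for Theorem~\ref{thm:LCP collection} --- but to execute it \emph{simultaneously} on the two input BWTs. Concretely, I represent every node $W$ of the generalized suffix tree of the union $\mathcal C_1 \cup \mathcal C_2$ not by a single BWT interval but by the \emph{pair} $(range_1(W), range_2(W))$, where $range_j(W)=[l_j,r_j]$ is the interval of the suffixes of $\mathcal C_j$ prefixed by $W$ (a component may be empty, in which case $l_j=r_j+1$ still records how many suffixes of $\mathcal C_j$ precede $W$). A character $c$ labels an out-edge of $W$ in the union suffix tree iff $Wc$ occurs in $\mathcal C_1$ or in $\mathcal C_2$, so I obtain the extensions of $W$ by invoking the wavelet-tree interval-extension primitive on \emph{both} components and taking the union of the returned characters, computing $range_1(Wc)$ and $range_2(Wc)$ from the two BWTs respectively. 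Taking the union is exactly what lets the traversal reach the nodes that are right-maximal only in $\mathcal C_1 \cup \mathcal C_2$ (children split across the two collections), even though each of these is invisible to either BWT alone.

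As a preliminary step, exactly as in the proof of Theorem~\ref{thm:LCP}, I replace each input BWT by its wavelet matrix, re-using the space of the output DA to hold the $n$ extra bits that the construction of~\cite{claude2015wavelet} needs; this keeps the extra working space within $o(n\log\sigma)$ and lets me assume a wavelet-tree representation in the remainder.

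The arithmetic that produces the DA is the observation that the suffixes of the union prefixed by $W$ form a contiguous block in global lexicographic order whose endpoints are fixed by the two intervals: exactly $l_1-1$ suffixes of $\mathcal C_1$ and $l_2-1$ suffixes of $\mathcal C_2$ precede every $W$-prefixed suffix, so the block of $W$ is $[\,l_1+l_2-1,\ r_1+r_2\,]$. Hence, whenever the traversal reaches a node whose pair is \emph{monochromatic} --- say $range_2(W)$ empty --- every suffix of that block comes from $\mathcal C_1$, and I can set $DA[\,l_1+l_2-1\,..\,r_1+r_2\,]$ in one shot and prune the subtree; the symmetric case fills the block with the $\mathcal C_2$ label. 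Leaves are the base case of this rule (singleton blocks), up to the standard tie-breaking convention required when a suffix of $\mathcal C_1$ and a suffix of $\mathcal C_2$ have identical content, which I fix once and for all (e.g.\ $\mathcal C_1$ before $\mathcal C_2$) and which the algorithm respects by ordering the two contributions accordingly. Every position of $DA$ is written exactly once, because the monochromatic blocks and leaves met along the traversal induce a partition of $[1,n]$.

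For the complexity, the traversal visits $O(n)$ suffix-tree nodes, and the total number of characters reported by the interval-extension primitive, summed over all nodes and both BWTs, is proportional to the number of edges of the generalized suffix trees of $\mathcal C_1$, $\mathcal C_2$ and their union, all $O(n)$; since each reported character costs $O(\log\sigma)$ amortized, the whole procedure runs in $O(n\log\sigma)$ time, with pruning only decreasing the work. The working-space analysis is identical to that of Theorem~\ref{thm:LCP}: when $\sigma<\sqrt n/\log^2 n$ the (collection-and-leaf extension of the) enumeration of Theorem~\ref{th:Belazzougui} uses $O(\sigma^2\log^2 n)=o(n)$ bits, while when $\sigma\ge\sqrt n/\log^2 n$, so that $\log\sigma=\Theta(\log n)$, I instead run the interval-based BFS of Beller et al.\ (Theorem~\ref{th:Beller}), generalized to two collections and to producing the DA in place of the LCP, whose queue fits in $O(n)=o(n\log\sigma)$ bits; in both regimes the extra space is $o(n\log\sigma)$. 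The step I expect to be most delicate is not the asymptotics but the simultaneous bookkeeping: arguing that running the primitive on two BWTs still charges only $O(n)$ reported characters in total (so that traversing union-tree edges absent from one collection does not inflate the cost), and pinning down the tie-breaking so that the merged start position $l_1+l_2-1$ is correct for equal cross-collection suffixes and the emitted DA is consistent with the intended total order on $\mathcal C_1 \cup \mathcal C_2$.
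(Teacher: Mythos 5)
Your proposal shares several of the paper's key ingredients: the pair-of-intervals representation $(range_1(W),range_2(W))$ with the convention for empty components, the arithmetic placing the block of $W$ at $\langle l_1+l_2-1,\ r_1+r_2\rangle$ with $\mathcal C_1$-suffixes first, the union of extension characters from the two wavelet trees, the wavelet-matrix preprocessing that re-uses the DA space, and the stack/queue dichotomy. However, the traversal you build around these ingredients has a genuine gap. You describe a \emph{top-down} traversal of the union suffix tree (out-edges, children $Wc$, ``prune the subtree''), but the wavelet-tree primitives available here ($\mathtt{getIntervals}$, $\mathtt{range\_distinct}$, $\mathtt{bwsearch}$) compute \emph{left}-extensions $cW$ only; obtaining $range_j(Wc)$ from $range_j(W)$ is a forward-search step that a one-directional BWT representation does not support. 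So the traversal in which your pruning argument is sound cannot be run with the stated tools.

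If instead you run the traversal that \emph{is} implementable --- the Weiner-link/left-extension enumeration, which is what Theorem~\ref{th:Belazzougui} and its extension to leaves actually do, and what the paper's Algorithm~\ref{alg:merge} uses (start from the pair of ranges of $\#$ and recursively left-extend) --- then pruning at monochromatic nodes is incorrect: in the left-extension tree the interval of a descendant $cW$ is \emph{not} nested inside the interval of $W$, but lies elsewhere in the suffix array, so cutting the recursion at a monochromatic node leaves the DA entries of all its descendants (equally monochromatic, but disjointly placed) unwritten, and no other visited block covers them. Concretely, take $\mathcal C_1=\{cab\#\}$ and $\mathcal C_2=\{b\#\}$: the leaf $ab\#$ has pair $\langle 2,2\rangle,\ \langle 2,1\rangle$, hence is monochromatic; pruning there means the leaf $cab\#$ (position $6$ of the merged suffix array) is never visited and $DA[6]$ is never set. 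This is precisely why the paper does \emph{not} prune: it enumerates \emph{all} leaves, i.e.\ all strings of the form $W\#$, whose intervals partition $[1,n]$, and fills the DA block of every one of them; monochromaticity is never used for control flow, and internal nodes of the union tree (together with your concern about strings that are right-maximal only in $\mathcal C_1\cup\mathcal C_2$) are not needed for the DA at all --- they matter only for the LCP extension that the paper sketches separately. Your complexity and space accounting would go through once the traversal is replaced by this prune-free leaf enumeration, with the queue (in the large-alphabet regime) holding pairs of leaf intervals as in the paper, rather than a DA-producing variant of Beller et al.'s node-interval BFS.
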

	
	When $k=\log\sigma$, the running time of~\cite[Thm. 7]{belazzougui2016linear} is the same as our Theorem \ref{th:merge} but the working space is higher: $n\log\sigma + O(n)$ bits.
	We also briefly discuss how to extend Theorem \ref{th:merge} to build the LCP array of the merged collection.
	In Section \ref{sec:experiments} we present an  implementation of our algorithms and an experimental comparison  with \eGap~\cite{egidi2017lightweight}, the state-of-the-art tool designed for the same task of merging BWTs while inducing the LCP of their union.

	\section{Belazzougui's Enumeration Algorithm}\label{sec:belazzougui}
	In~\cite{belazzougui2014linear}, Belazzougui showed that a BWT with \emph{rank} and \emph{range distinct} functionality (see Appendix \ref{sec:notation}) is sufficient to enumerate in small space a rich representation of the internal nodes of the suffix tree of a text $T$.
	In this section we describe his algorithm. 
	
	Remember that explicit suffix tree nodes correspond to right-maximal text substrings. By definition, for any right-maximal substring $W$ there exist at least two distinct characters $c_1, \dots, c_k$ such that $Wc_i$ is a substring of $T$, for $i=1, \dots, k$. 
	The first idea is to represent any text substring $W$ (not necessarily right-maximal) as follows. Let $\mathtt{chars_W[1] < \dots < chars_W[k_W]}$ be the (lexicographically-sorted) character array such that $W\cdot \mathtt{chars_W[i]}$ is a substring of $T$ for all\footnote{We require $\mathtt{chars_W}$ to be also complete: if $Wc$ is a substring of $T$, then $c\in \mathtt{chars_W}$} $i=1,\dots, k_W$, where $k_W$ is the number of right-extensions of $W$.
	Let moreover $\mathtt{first_W[1..k_W+1]}$ be the array such that $\mathtt{first_W[i]}$ is the starting position of (the range of) $W\cdot \mathtt{chars_W[i]}$ in the suffix array of $T$ for $i=1,\dots, k_W$, and $\mathtt{first_W[k_W+1]}$ is the end position of $W$ in the suffix array of $T$. The representation for $W$ is (differently from~\cite{belazzougui2014linear}, we omit $\mathtt{chars_W}$ from the representation and we add $|W|$; these modifications will turn useful later):
	$$
	\mathtt{repr(W) = \langle \mathtt{first_W},\ |W| \rangle}
	$$
	Note that, if $W$ is not right-maximal and is not a text suffix, then $W$ is followed by $k_W=1$ distinct characters in $T$ and the above representation is still well-defined.
	When $W$ is right-maximal, we will also say that $\mathtt{repr(W)}$ is the representation of a suffix tree explicit node (i.e. the node reached by following the path labeled $W$ from the root). 
	At this point, the enumeration algorithm works by visiting the Weiner Link tree of $T$ starting from the root's representation $\mathtt{repr(\epsilon) = \langle \mathtt{first_\epsilon},\ 0 \rangle}$, where $\mathtt{first_\epsilon} = \langle C[c_1], \dots, C[c_\sigma], n \rangle$ (see Appendix \ref{sec:notation} for a definition of the $C$-array) and $c_1, \dots, c_\sigma$ 
	are all (and only) the sorted alphabet's characters.
	The visit uses a stack storing representations of suffix tree nodes, initialized with  $\mathtt{repr(\epsilon)}$. At each iteration, we pop the head $\mathtt{repr(W)}$ from the stack and we push $\mathtt{repr(cW)}$ such that $cW$ is right-maximal in $T$. 
	If nodes are pushed on the stack in decreasing order of interval length, then the stack's size never exceeds $O(\sigma\log n)$.
	For completeness, in Appendix \ref{app:belazzougui} we describe in detail how Weiner links are computed, and show that with this strategy we visit all suffix tree nodes in $O(n\log\sigma)$ time using overall $O(\sigma^2\log^2 n)$ bits of additional space (for the stack). 
	In Section \ref{sec:algo1} we show that this enumeration algorithm can be used to compute the LCP array from the BWT of a collection. 
	
	\section{Beller et al.'s Algorithm}\label{sec:beller}
	
	Also Beller et al.'s algorithm~\cite{beller2013computing} works by enumerating a (linear) subset of the BWT intervals. LCP values are induced from a particular visit of those intervals. Belazzougui's and Beller et al.'s algorithms have, however, two key differences which make the former more space-efficient on small alphabets, while the latter more space-efficient on large alphabets: 	(i) Beller et al. use a queue (FIFO) instead of a stack (LIFO), and (ii) they represent $W$-intervals with just the pair of coordinates $\mathtt{range(W)}$ and the value $|W|$.
	In short, while Beller et al.'s queue might grow up to size $\Theta(n)$, the use of intervals (instead of the more complex representation used by Belazzougui) makes it possible to represent it using $O(1)$ bitvectors of length $n$. On the other hand, the size of Belazzougui's stack can be upper-bounded by $O(\sigma\log n)$, but its elements take more space to be represented. 
	
	Beller et al.'s algorithm starts by initializing all LCP entries to $\bot$ (an undefined value), and by inserting in the queue the triple $\langle1,n,0 \rangle$, where the first two components are the BWT interval of $\epsilon$ (the empty string) and the third component is its length. From this point, the algorithm keeps performing the following operations until the queue is empty. We remove the first (i.e. the oldest) element $\langle L,R,\ell\rangle$ from the queue, which (by induction) is the interval and length of some string $W$: $\mathtt{range(W)}= \langle L,R \rangle$ and $|W|=\ell$. 
	Using  operation $\mathtt{getIntervals(L,R,BWT)}$~\cite{beller2013computing} (see Appendix \ref{sec:notation})
	we left-extend the BWT interval $\langle L,R\rangle$ with the characters $c_1, \dots, c_k$ in $\mathtt{rangeDistinct(L,R)}$, obtaining the triples $\langle L_1, R_1, \ell+1 \rangle, \dots, \langle L_k, R_k, \ell+1 \rangle$ corresponding to the strings $c_1W, \dots, c_kW$. For each such triple $\langle L_i, R_i, \ell+1\rangle$, if $R_i\neq n$ and $LCP[R_i+1] = \bot$ then we set $LCP[R_i+1] \leftarrow \ell$ and push $\langle L_i, R_i, \ell+1\rangle$ on the queue. Importantly, note that we can push the intervals returned by $\mathtt{getIntervals(L,R,BWT)}$ in the queue in any order; as discussed in Appendix \ref{sec:notation}, this step can be implemented with just $O(\log n)$ bits of space overhead with a DFS-visit of the wavelet tree's sub-tree induced by $BWT[L,R]$ (i.e. the intervals are not stored temporarily anywhere: they are pushed as soon as they are generated).
	To limit space usage, Beller et al. use two different queue representations. As long as there are $O(n/\log n)$ elements in the queue, they use a simple vector. When there are more intervals, they switch to a representation based on four bitvectors of length $n$ that still guarantees constant amortized operations. All details are described in Appendix \ref{app:beller}. Beller et al.~\cite{beller2013computing} show that the above algorithm correctly computes the LCP array of a text. In the next section we generalize the algorithm to text collections. 
	
	
	\section{Our Algorithms}
	
	We describe our algorithms directly on string collections. This will include, as a particular case, inputs formed by a single text. 
	Procedure \texttt{BGOS(BWT,LCP)} in Line \ref{beller et al.} of Algorithm \ref{alg:fill nodes} is a call to Beller et al.'s algorithm, modified as follows. First, we set $\mathtt{LCP[C[c]]}\leftarrow 0$ for all $c\in\Sigma$. Then, we push in the queue $\langle \mathtt{range(c), 1} \rangle$ for all $c\in\Sigma$ and start the main algorithm. Note moreover that (see Appendix \ref{sec:notation}) from now on we never left-extend ranges with $\#$.
	
	
	\subsection{Computing the LCP From the BWT}\label{sec:algo1}
	
	Let $\mathcal C$ be a text collection where each string is ended by a terminator $\#$ (common to all strings). 
	Consider now the LCP and GSA (generalized Suffix Array) arrays of $\mathcal C$. We divide LCP values in two types. 
	Let $GSA[i] = \langle j,k \rangle$, with $i>1$, indicate that the $i$-th suffix in the lexicographic ordering of all suffixes of strings in $\mathcal C$ is $\mathcal C[j][k..]$. A LCP value $\mathtt{LCP[i]}$ is of \emph{node type} when the $i$-th and $(i-1)$-th suffixes are distinct: $\mathcal C[j][k..] \neq \mathcal C[j'][k'..]$, where $GSA[i] = \langle j,k \rangle$ and $GSA[i-1] = \langle j',k' \rangle$. Those two suffixes differ before the terminator is reached in both suffixes (it might be reached in one of the two suffixes, however); we use the name \emph{node-type} because $i-1$ and $i$ are the last and first suffix array positions of the ranges of two adjacent children of some suffix tree node, respectively (i.e. the node corresponding to string $\mathcal C[j][k..k+LCP[i]-1]$).
	Note that it might be that one of the two suffixes, $\mathcal C[j][k..]$ or $\mathcal C[j'][k'..]$, is the empty string (followed by the terminator) $\#$.
	Similarly, a \emph{leaf-type} LCP value $\mathtt{LCP[i]}$ is such that the $i$-th and $(i-1)$-th suffixes are equal: $\mathcal C[j][k..] = \mathcal C[j'][k'..]$. We use the name \emph{leaf-type} because, in this case, it must be the case that $i \in [L+1,R]$, where $\langle L,R \rangle$ is the suffix array range of some suffix tree leaf (it might be that $R>L$ since there might be repeated suffixes in the collection). Note that, in this case, $\mathcal C[j][k..] = \mathcal C[j'][k'..]$ could coincide with $\#$. Entry $LCP[0]$ escapes the above classification, so we will set it separately.
	
	Our idea is to compute first node-type and then leaf-type LCP values. We argue that Beller et al.'s algorithm already computes the former kind of LCP values. When this algorithm uses too much space (i.e. on small alphabets), we show that Belazzougui's enumeration strategy can be adapted to reach the same goal: by the very definition of node-type LCP values, they lie between children of some suffix tree node $x$, and their value corresponds to the string depth of $x$. This strategy is described in Algorithm \ref{alg:fill nodes}. Function $\mathtt{BWT.Weiner(x)}$ in Line \ref{range distinct2} takes as input the representation of a suffix tree node $x$ and returns all explicit nodes reached by following Weiner links form $x$ (an implementation of this function is described in Appendix \ref{app:belazzougui}).
	Leaf-type LCP values, on the other hand, can easily be computed by enumerating intervals corresponding to suffix tree leaves. To reach this goal, it is sufficient to enumerate ranges of suffix tree leaves starting from $\mathtt{range(\#)}$ and recursively left-extending with backward search with characters different than $\#$ whenever possible. For each range $\langle L,R \rangle$ obtained in this way, we set each entry $LCP[L+1,R]$ to the string depth (terminator excluded) of the corresponding leaf. 
	This strategy is described in Algorithm \ref{alg:fill leaves}.
	In order to limit space usage, we use again a stack or a queue to store leaves and their string depth (note that each leaf takes $O(\log n)$ bits to be represented): we use a queue when $\sigma > n/\log^3n$, and a stack otherwise. 	
	This guarantees that the bit-size of the queue/stack never exceeds $o(n\log\sigma)$ bits.
	The queue is the same used by Beller et al.\cite{beller2013computing} and described in Appendix \ref{app:beller}.
	Note that in Lines \ref{getIntervals1}-\ref{push2} we can afford storing temporarily the $k$ resulting intervals since, in this case, the alphabet's size is small enough. 
	To sum up, our full procedure works as follows: 
	\begin{enumerate}
		\item We initialize an empty array $\mathtt{LCP[1..n]}$.
		\item We fill node-type entries using procedure \texttt{Node-Type}$(\mathtt{BWT, LCP})$ described in Algorithm \ref{alg:fill nodes}. 
		\item We fill leaf-type entries using procedure \texttt{Leaf-Type}$(\mathtt{BWT, LCP})$ described in Algorithm \ref{alg:fill leaves}.
	\end{enumerate}

	\begin{algorithm}[th!]
		\caption{\texttt{Node-Type(BWT,LCP)}}
		\label{alg:fill nodes}
		
		\SetKwInOut{Input}{input}
		\SetKwInOut{Output}{behavior}
		\SetSideCommentLeft
		\LinesNumbered
		
		\Input{Wavelet tree of the Burrows-Wheeler transformed collection  $\mathtt{BWT}\in [1,\sigma]^n$ and empty array $\mathtt{LCP[1..n]}$.}
		\Output{Fills node-type $\mathtt{LCP}$ values.}
		\BlankLine
		\BlankLine
		
		\eIf{$\sigma > \sqrt n/\log^2n$}{
			
			$\mathtt{BGOS(BWT,LCP)}$\tcc*[r]{Run Beller et al.'s algorithm}\label{beller et al.}
			
		}{
			
			$\mathtt P \leftarrow \texttt{new\_stack()}$\tcc*[r]{Initialize new stack}\label{new stack2}	
			
			\BlankLine
			
			$\mathtt P\mathtt{.push( repr(\epsilon))}$\tcc*[r]{Push representation of $\epsilon$}\label{push3}
			
			\BlankLine
			
			\While{$\mathtt{\mathbf{not}\ P.empty()}$}{\label{while2}
				\BlankLine
				
				$\langle \mathtt{first_W},\ \ell \rangle \leftarrow \mathtt{P.pop()}$\tcc*[r]{Pop highest-priority element}\label{pop2}	
				$t \leftarrow |\mathtt{first_W}|-1$\tcc*[r]{Number of children of ST node}\label{nchild}
				
				\BlankLine
				
				\For{$i = 2, \dots, t$}{
					$\mathtt{LCP}[\mathtt{first_W}[i]] \leftarrow \ell$\tcc*[r]{Set LCP value}\label{LCP in Node}
				}	
				
				\BlankLine
				
				$x_1,\dots, x_k \leftarrow \mathtt{BWT.Weiner(\langle first_W,\ \ell \rangle)}$\tcc*[r]{Follow Weiner Links}\label{range distinct2}
				
				\BlankLine
				
				$x'_1, \dots, x'_k \leftarrow \mathtt{sort}(x_1, \dots, x_k)$\tcc*[r]{Sort by interval length}\label{sort2}
				
				\BlankLine
				
				\For{$i=k\dots 1$}{
					
					$\mathtt{P.push}(x'_i)$\tcc*[r]{Push representations}\label{push4}
					
				}
				
			}
			
		}
		
		\BlankLine

		$\mathtt{LCP[0]} \leftarrow 0$\;\label{LCP[0]}		
		
	\end{algorithm}

	\begin{algorithm}[th!]
		\caption{\texttt{Leaf-Type(BWT,LCP)}}
		\label{alg:fill leaves}
		
		\SetKwInOut{Input}{input}
		\SetKwInOut{Output}{behavior}
		\SetSideCommentLeft
		\LinesNumbered
		
		\Input{Wavelet tree of the Burrows-Wheeler transformed collection $\mathtt{BWT}\in [1,\sigma]^n$ and array $\mathtt{LCP[1..n]}$.}
		\Output{Fills leaf-type $\mathtt{LCP}$ values.}
		\BlankLine	
		
		
		\BlankLine
		
		\eIf{$\sigma > n/\log^3n$}{
			
			$\mathtt P \leftarrow \mathtt{new\_queue()}$\tcc*[r]{Initialize new queue}\label{new queue1}
			
		}{
			
			$\mathtt P \leftarrow \mathtt{new\_stack()}$\tcc*[r]{Initialize new stack}\label{new stack1}
			
		}
		
		\BlankLine
		
		$\mathtt P\mathtt{.push( BWT.range(\#),0)}$\tcc*[r]{Push range of terminator and LCP value 0}\label{push1}
		
		\BlankLine
		
		\While{$\mathtt{\mathbf{not}\ P.empty()}$}{\label{while1}
			\BlankLine
			
			$\langle \langle L,R \rangle, \ell \rangle \leftarrow \mathtt{P.pop()}$\tcc*[r]{Pop highest-priority element}\label{pop1}	
			
			\BlankLine
			
			\For{$i=L+1\dots R$}{
				$\mathtt{LCP}[i] \leftarrow \ell$\tcc*[r]{Set LCP inside range of ST leaf}\label{LCP in Leaves}
			}	
			
			\BlankLine

			\eIf{$\sigma > n/\log^3n$}{
				
				$\mathtt{P.push(getIntervals(L, R, BWT), \ell+1)}$\tcc*[r]{Pairs $\langle$interval,$\ell+1\rangle$}\label{push7}

			}{

				$\langle L_i, R_i\rangle_{i=1, \dots, k} \leftarrow \mathtt{getIntervals(L, R,BWT)}$\;\label{getIntervals1}
				
				$\langle L'_i, R'_i\rangle_{i=1, \dots, k} \leftarrow \mathtt{sort}(\langle L_i, R_i\rangle_{i=1, \dots, k})$\tcc*[r]{Sort by interval length}\label{sort1}
				
				\BlankLine
				
				\For{$i=k\dots 1$}{
					
					$\mathtt{P.push}(\langle L'_i, R'_i\rangle, \ell+1)$\tcc*[r]{Push in order of decreasing length}\label{push2}
					
				}
	
			}			
		}
	\end{algorithm}

	
	
	Theorems \ref{thm:LCP} and \ref{thm:LCP collection} follow from the correctness of our procedure, which for space reasons is reported in Appendix \ref{app:proof} as Lemma \ref{lemma:proof of thm1}.
	As a by-product, in Appendix \ref{sec:ST} we note that Algorithm \ref{alg:fill nodes} can be used to enumerate suffix tree intervals in succinct space from the BWT, which could be of independent interest.

	\subsection{Merging BWTs in Small Space}\label{sec:algo2}
	
	The procedure of Algorithm \ref{alg:fill leaves} can be extended to merge BWTs of two collections $\mathcal C_1$, $\mathcal C_2$ using $o(n\log\sigma)$ bits of working space on top of the input BWTs and output Document Array (here, $n$ is the cumulative length of the two BWTs). The idea is to simulate a navigation of the leaves of the generalized suffix tree of $\mathcal C_1 \cup \mathcal C_2$ (note: for us, a collection is an ordered multi-set of strings). Each leaf is represented by a pair of intervals, respectively on $BWT(\mathcal C_1)$ and $BWT(\mathcal C_2)$, of strings of the form $W\#$. Note that: (i) the suffix array of $\mathcal C_1 \cup \mathcal C_2$ is covered by the non-overlapping intervals of strings of the form $W\#$, and (ii) for each such string $W\#$, the interval $\mathtt{range(W\#)} = \langle L,R \rangle$ in $GSA(\mathcal C_1 \cup \mathcal C_2)$ can be partitioned as $\langle L, M \rangle \cdot \langle M+1, R\rangle$, where $\langle L,M\rangle$ contains only suffixes from  $\mathcal C_1$ and $\langle M+1,R \rangle$ contains only suffixes from  $\mathcal C_2$ (one of these two intervals could be empty).	
	It follows that we can navigate in parallel the leaves of the suffix trees of $\mathcal C_1$ and $\mathcal C_2$ (using again a stack or a queue containing pairs of intervals on the two BWTs), and fill the Document Array $DA[1..n]$, an array that will tell us whether the $i$-th entry of $BWT(\mathcal C_1 \cup \mathcal C_2)$ comes from $BWT(\mathcal C_1)$ ($DA[i] = 0$) or $BWT(\mathcal C_2)$ ($DA[i] = 1$). To do this, let $\langle L_1, R_1\rangle$ and $\langle L_2, R_2\rangle$ be the ranges on the suffix arrays of $\mathcal C_1$ and $\mathcal C_2$, respectively, of a  suffix $W\#$ of some string in the collections. 
	Note that one of the two intervals could be empty: $R_j<L_j$. In this case, we still require that $L_j-1$ is the number of suffixes in $\mathcal C_j$ that are smaller than $W\#$.
	Then, in the collection $\mathcal C_1 \cup \mathcal C_2$ there are $L_1 + L_2 - 2$ suffixes smaller than $W\#$, and $R_1 + R_2$ suffixes smaller than or equal to $W\#$. It follows that the range of $W\#$ in the suffix array of $\mathcal C_1 \cup \mathcal C_2$ is $\langle L_1+L_2-1, R_1+R_2\rangle$, where the first $R_1-L_1+1$ entries correspond to suffixes of strings from $\mathcal C_1$. Then, we set $DA[L_1+L_2-1, L_2 + R_1-1] \leftarrow 0$ and $DA[L_2 + R_1,R_1+R_2] \leftarrow 1$. 
	The procedure starts from the pair of intervals corresponding to the ranges of the string ``$\#$'' in the two BWTs, and proceeds recursively by left-extending the current pair of ranges $\langle L_1, R_1\rangle$, $\langle L_2, R_2\rangle$ with the symbols in $\mathtt{BWT_1.range\_distinct(L_1,R_1)} \cup \mathtt{BWT_2.range\_distinct(L_2,R_2)}$.
	For space reasons, the detailed procedure is reported in Appendix \ref{app:merge} as Algorithm \ref{alg:merge}. The leaf visit is implemented, again, using a stack or a queue; this time however, these containers are filled with pairs of intervals $\langle L_1, R_1\rangle$, $\langle L_2, R_2\rangle$. 
	We implement the stack simply as a vector of quadruples $\langle L_1, R_1, L_2, R_2\rangle$. As far as the queue is concerned, some care needs to be taken when representing the pairs of ranges using bitvectors as seen in Appendix \ref{app:beller} with Beller et al.'s representation. 
	Recall that, at any time, the queue can be partitioned in two sub-sequences associated with LCP values $\ell$ and $\ell+1$ (we pop from the former, and push in the latter).
	This time, we represent each of these two subsequences as a vector of quadruples (pairs of ranges on the two BWTs) as long as the number of quadruples in the sequence does not exceed $n/\log n$. When there are more quadruples than this threshold, we switch to a bitvector representation defined as follows. 
	Let $|BWT(\mathcal C_1)|=n_1$, $|BWT(\mathcal C_2)|=n_2$, and $|BWT(\mathcal C_1\cup \mathcal C_2)| = n = n_1+n_2$.
	We keep two bitvectors $\mathtt{Open[1..n]}$ and $\mathtt{Close[1..n]}$ storing opening and closing parentheses of intervals in $BWT(\mathcal C_1\cup \mathcal C_2)$. We moreover keep two bitvectors $\mathtt{NonEmpty_1[1..n]}$ and $\mathtt{NonEmpty_2[1..n]}$ keeping track, for each $i$ such that $\mathtt{Open[i]=1}$, of whether the interval starting in $BWT(\mathcal C_1\cup \mathcal C_2)[i]$ contains suffixes of reads coming from $\mathcal C_1$ and $\mathcal C_2$, respectively. Finally, we keep four bitvectors $\mathtt{Open_j[1..n_j]}$ and $\mathtt{Close_j[1..n_j]}$, for $j=1,2$, storing non-empty intervals on $BWT(\mathcal C_1)$ and $BWT(\mathcal C_2)$, respectively. To insert a pair of intervals $\langle L_1, R_1\rangle,\ \langle L_2, R_2\rangle$ in the queue, let $\langle L,R \rangle = \langle L_1+L_2-1, R_1+R_2\rangle$. We set $\mathtt{Open[L]} \leftarrow 1$ and  $\mathtt{Close[R]} \leftarrow 1$. Then, for $j=1,2$, we set $\mathtt{NonEmpty_j[L]} \leftarrow 1$, $\mathtt{Open_j[L_j]} \leftarrow 1$ and $\mathtt{Close_j[R_j]} \leftarrow 1$ if and only if $R_j\geq L_j$. 
	This queue representation takes $O(n)$ bits. 
	By construction, for each bit set in $\mathtt{Open}$ at position $i$, there is a corresponding bit set in 
	$\mathtt{Open_j}$ if and only if $\mathtt{NonEmpty_j[i]} = 1$ (moreover, corresponding bits set appear in the same order in $\mathtt{Open}$ and
	$\mathtt{Open_j}$). It follows that a left-to-right scan of these bitvectors is sufficient to identify corresponding intervals on $BWT(\mathcal C_1\cup \mathcal C_2)$, $BWT(\mathcal C_1)$, and $BWT(\mathcal C_2)$.
	By packing the bits of the bitvectors in words of $\Theta(\log n)$ bits, the $t$ pairs of intervals contained in the queue can be extracted in $O(t+ n/\log n)$ time (as described in~\cite{beller2013computing}) by scanning in parallel the bitvectors forming the queue. Particular care needs to be taken only when we find the beginning of an interval $\mathtt{Open[L]=1}$ with $\mathtt{NonEmpty_1[L]} = 0$ (the case $\mathtt{NonEmpty_2[L]} = 0$ is symmetric). Let $L_2$ be the beginning of the corresponding non-empty interval on $BWT(\mathcal C_2)$. Even though we are not storing $L_1$ (because we only store nonempty intervals), we can retrieve this value as $L_1=L-L_2+1$. Then, the empty interval on $BWT(\mathcal C_1)$ is $\langle L_1, L_1-1\rangle$.
	
	The same arguments used in the previous section show that the algorithm runs in $O(n\log\sigma)$ time and uses $o(n\log\sigma)$ bits of space on top of the input BWTs and output Document Array. This proves Theorem \ref{th:merge}.
	To conclude, we note that the algorithm can be extended to compute the LCP array of the merged collection while merging the BWTs. This requires adapting Algorithm \ref{alg:fill nodes} to work on pairs of suffix tree nodes (as we did in Algorithm \ref{alg:merge} with pairs of leaves), but for space reasons we do not describe all details here. Results on an implementation of the extended algorithm are discussed in the next section. 
	From the practical point of view, note that it is more advantageous to induce the LCP of the merged collection while merging the BWTs (rather than first merging and then inducing the LCP using the algorithm of the previous section), since leaf-type LCP values can be induced directly while computing the document array.

	\section{Implementation and Experimental Evaluation}\label{sec:experiments}
	
	We implemented our algorithms on DNA alphabet in \repo using the language C++. 
	Thanks to the small alphabet size, it was actually sufficient to implement our extension of Belazzougui's enumeration algorithm (and not the strategy of Beller et al., which is more suited to large alphabets).
	The repository features a new packed string on DNA alphabet $\Sigma_{DNA}=\{A,C,G,T,\#\}$ using 4 bits per character and able to compute the quintuple $\langle rank_c(i) \rangle_{i\in \Sigma_{DNA}}$ with just one cache miss.  This is crucial for our algorithms, since at each step we need to left-extend ranges by all characters. We also implemented a packed string on the augmented alphabet $\Sigma_{DNA}^+=\{A,C,G,N,T,\#\}$ using $4.38$ bits per character and offering the same cache-efficiency guarantees. Several heuristics have been implemented to reduce the number of cache misses in practice. In particular, we note that in Algorithm \ref{alg:fill leaves} we can avoid backtracking when the range size becomes equal to one; the same optimization can be implemented in Algorithm \ref{alg:merge} when also computing the LCP array, since leaves of size one can be identified during navigation of internal suffix tree nodes. Overall, we observed (using a memory profiler) that in practice the combination of Algorithms \ref{alg:fill nodes}-\ref{alg:fill leaves} generates at most $1.5n$ cache misses; the extension of Algorithm \ref{alg:merge} that computes also LCP values generates twice this number of cache misses (this is expected, since it navigates two BWTs).
	
	We now report some preliminary experiments on our algorithms: \induceLCP (Algorithms \ref{alg:fill nodes}-\ref{alg:fill leaves}) and \mergeBwtLCP (Algorithm \ref{alg:merge}, extended to compute also the LCP array). 
	All tests were done on a DELL PowerEdge R630 machine, used in non exclusive mode.
	Our platform is a $24$-core machine with Intel(R) Xeon(R) CPU E5-2620 v3 at $2.40$ GHz, with $128$ GiB of shared memory. The system is Ubuntu 14.04.2 LTS.
	
	\begin{table}[t]
		\centering
		\begin{tabular}{|@{\ }l@{\ }|@{\ }c@{\ }|@{\ }c@{\ }|@{\ }c@{\ }|@{\ }c@{\ }|c@{\ }|c@{\ }|}
			\hline
			Name        & Size & $\sigma$          & N. of  & Max      & Bytes for\\
			& GiB   &              & reads  & length   & lcp values\\
			\hline
			NA12891.8           & 8.16      & 5               & 85,899,345  & 100 & 1   \\
			\hline
			shortreads         & 8.0     & 6               & 85,899,345  & 100      & 1  \\
			\hline
			pacbio                & 8.0      & 6            & 942,248      & 71,561  & 4   \\
			\hline
			pacbio.1000       & 8.0      & 6              & 8,589,934    & 1000    & 2\\
			\hline
				NA12891.24        &   23.75    & 6               & 250,000,000  & 100 & 1   \\
			\hline
			NA12878.24           &  23.75    & 6               & 250,000,000  & 100 & 1   \\
			\hline
		\end{tabular}
		\caption{Datasets used in our experiments. Size accounts only for the alphabet's characters. The alphabet's size $\sigma$ includes the terminator.\vspace{-20pt}}
		\label{tableDataset}
	\end{table}
	
	Table \ref{tableDataset} summarizes the datasets used in our experiments. 
	``NA12891.8G'' contains Human DNA reads on the alphabet $\Sigma_{DNA}$ downloaded from~\footnote{\url{ftp://ftp.1000genomes.ebi.ac.uk/vol1/ftp/phase3/data/NA12891/sequence_read/SRR622458_1.filt.fastq.gz}}, where we have removed reads containing the nucleotide $N$.
	``shortreads'' contains Human DNA short reads on the extended alphabet $\Sigma_{DNA}^+$. 
	``pacbio'' contains PacBio RS II reads from the species \emph{Triticum aestivum} (wheat).
	``pacbio.1000'' are the strings from ``pacbio'' trimmed to length 1,000. 
	All the above datasets except the first have been download from \url{https://github.com/felipelouza/egap/tree/master/dataset}.
	To conclude, we added two collections, ``NA12891.24'' and ``NA12878.24'' obtained by taking the first $250,000,000$ reads from individuals NA12878\footnote{\url{ftp://ftp.1000genomes.ebi.ac.uk/vol1/ftp/phase3/data/NA12878/sequence_read/SRR622457_1.filt.fastq.gz}} and NA12891. All datasets except ``NA12891.8'' are on the alphabet $\Sigma_{DNA}^+$. In Tables \ref{tab:merge} and \ref{tab:induce}, the suffix ``.RC'' added to a dataset's name indicates the reverse-complemented dataset.

	We  compare our algorithms with \eGap\footnote{\url{https://github.com/felipelouza/egap}} and BCR~\footnote{\url{https://github.com/giovannarosone/BCR_LCP_GSA}}, two tools designed to build the BWT and LCP of a set of DNA reads. 
	Since no tools for inducing the LCP from the BWT of a set of strings are available in the literature, in Table \ref{tab:induce} we simply compare the resources used by \induceLCP with the time and space requirements of  \eGap and BCR when building the BWT.
	In \cite{egidiLouzaManziniTellesWABI2018}, experimental results show that BCR works better on short reads and collections with a large average LCP, while \eGap works better when the datasets contain long reads and relatively small average LCP. 
	For this reason, in the preprocessing step we have used BCR for the collections containing short reads and \eGap for the other collections.
	\eGap, in addition, is capable of merging two or more BWTs while inducing the LCP of their union. In this case, we can therefore directly compare the performance of \eGap with our tool \mergeBwtLCP; results are reported in Table \ref{tab:merge}.
	Since the available RAM is greater than the size of the input, we have used the semi-external strategy of \eGap.
	Notice that an entirely like-for-like comparison between our tools and \eGap is not completely feasible, being \eGap a semi-external memory tool (our tools, instead, use internal memory only).
	While in our tables we report RAM usage only, it is worth to notice that \eGap uses a considerable amount of disk working space. For example, the tool uses $56$GiB of disk working space when run on a $8$GiB input (in general, the disk usage is of $7n$ bytes).
	
	As predicted by theory, our tools exhibit a dataset-independent linear time complexity (whereas \eGap requires more processing time on datasets with long average LCP).
	Table \ref{tab:induce} shows that our tool \induceLCP induces the LCP from the BWT faster than building the BWT itself. When 'N's are not present in the dataset, \induceLCP processes data at a rate of $2.92$ megabases per second and uses $0.5$ Bytes per base in RAM in addition to the LCP. When 'N's are present, the throughput decreases to $2.12$ megabases per second and the tool uses  $0.55$ Bytes per base in addition to the LCP.
	As shown in Table \ref{tab:merge}, our tool \mergeBwtLCP is from $1.25$ to $4.5$ times faster than \eGap on inputs with large average LCP, but $1.6$ times slower when the average LCP is small (dataset ``pacbio''). When 'N's are not present in the dataset, \mergeBwtLCP processes data at a rate of $1.48$ megabases per second and uses $0.625$ Bytes per base in addition to the LCP. When 'N's are present, the throughput ranges from $1.03$ to $1.32$ megabases per second and the tool uses  $0.673$  Bytes per base in addition to the LCP. When only computing the merged BWT (results not shown here for space reasons), \mergeBwtLCP uses in total $0.625$/$0.673$ Bytes per base in RAM (without/with 'N's) and is about $1.2$ times faster than the version computing also the LCP.

	\begin{table}[t]
		\centering
		\begin{tabular}{|c|c|c|c|c|c|c|c|c|}
			\hline
			& \multicolumn{2}{c|}{Preprocessing}     & \multicolumn{2}{c|}{\eGap}    & \multicolumn{2}{c|}{\mergeBwtLCP}    \\ 
			\hline
			Name            &  Wall Clock &   RAM   &   Wall Clock    & RAM                & Wall Clock   & RAM      \\ 
			&  (h:mm:ss)  &    (GiB) &   (h:mm:ss)    &  (GiB)              &  (h:mm:ss)   &  (GiB)     \\
			\hline
			NA12891.8    & 1:15:57 & 2.84 & \multirow{2}{*}{10:15:07}  &   \multirow{2}{*}{18.09 (-m 32000)}    &   \multirow{2}{*}{3:16:40}    &  \multirow{2}{*}{26.52}  \\
			\cline{1-3}
			NA12891.8.RC & 1:17:55 & 2.84 &                            &                                        &                               &             \\
			\hline
			shortreads     &  1:14:51 & 2.84 & \multirow{2}{*}{11:03:10}  &    \multirow{2}{*}{16.24  (-m 29000)}  &   \multirow{2}{*}{3:36:21}    &  \multirow{2}{*}{26.75}   \\ 
			\cline{1-3}
			shortreads.RC  & 1:19:30 & 2.84 &                            &                                        &                               &             \\
			\hline
			pacbio.1000    &  2:08:56 & 31.28 & \multirow{2}{*}{5:03:01}   &   \multirow{2}{*}{21.23 (-m 45000)}    &  \multirow{2}{*}{4:03:07}     &  \multirow{2}{*}{42.75}   \\ 
			\cline{1-3}
			pacbio.1000.RC &  2:15:08 & 31.28 &                            &                                        &                               &             \\
			\hline
			pacbio         &  2:27:08 & 31.25 & \multirow{2}{*}{2:56:31}   &   \multirow{2}{*}{33.40 (-m 80000)}    &   \multirow{2}{*}{4:38:27}    &  \multirow{2}{*}{74.76}        \\ 
			\cline{1-3}
			pacbio.RC      &  2:19:27 & 31.25 &                            &                                        &                               &             \\
			\hline
				NA12878.24   & 4:24:27  & 7.69  & \multirow{2}{*}{31:12:28}   &   \multirow{2}{*}{47.50 (-m 84000)}    &   \multirow{2}{*}{6:41:35}    &  \multirow{2}{*}{73.48}        \\ 
			\cline{1-3}
			NA12891.24      & 4:02:42  & 7.69  &                            &                                        &                               &             \\
			\hline
		\end{tabular}
		\caption{In this experiment, we merge pairs of BWTs and induce the LCP of their union using \eGap and \mergeBwtLCP. We also show the resources used by the pre-processing step (building the BWTs) for comparison. Wall clock is the elapsed time from start to completion of the instance, while RAM (in GiB) is the peak Resident Set Size (RSS). All values were taken using the \texttt{/usr/bin/time} command. During the preprocessing step on the collections pacBio.1000 and pacBio, the available memory in MB (parameter m) of \eGap was set to 32000 MB. In the merge step this parameter was set to about to the memory used by \mergeBwtLCP.\vspace{-15pt}}
		\label{tab:merge}
	\end{table}

	
	
	
	
	
	
	
	\begin{table}[t]
		\centering
		\begin{tabular}{|c|c|c|c|c|c|c|}
			\hline
			& \multicolumn{2}{c|}{Preprocessing}  & \multicolumn{2}{c|}{\induceLCP}          \\ 
			\hline
			Name         & Wall Clock    & RAM       & Wall Clock  & RAM      \\ 
			&  (h:mm:ss)    & GiB        &  (h:mm:ss)  &  (GiB)    \\
			\hline
			NA12891.8 $\cup$ NA12891.8.RC (BCR)  &  2:43:02      & 5.67   &  1:40:01 &    24.48     \\
			\hline
			shortread $\cup$ shortread.RC (BCR) & 2:47:07    &   5.67   &  2:14:41 &   24.75     \\ 
			\hline
			pacbio.1000 $\cup$ pacbio.1000.RC (\eGap -m 32000)  &     7:07:46   &  31.28  &     1:54:56    &   40.75         \\ 
			\hline
			pacbio $\cup$ pacbio.RC (\eGap -m 80000)      &  6:02:37     &    78.125       &   2:14:37   &   72.76    \\ 
			\hline
			NA12878.24 $\cup$ NA12891.24 (BCR)      & 8:26:34      &  16.63        &   6:41:35   &   73.48    \\ 
			\hline
		\end{tabular}
		\caption{In this experiment, we induced the LCP array from the BWT of a collection (each collection is the union of two collections from Table \ref{tab:merge}). We also show pre-processing requirements (i.e. building the BWT) of the better performing tool between BCR and \eGap.\vspace{-15pt}}
		\label{tab:induce}
	\end{table}

	

	

	\bibliographystyle{plain}
	\bibliography{paper}
	
	\newpage
	
	\appendix
	
		\section{Basic Concepts}\label{sec:notation}
	
	Let $\Sigma =\{c_1, c_2, \ldots, c_\sigma\}$ be a finite ordered alphabet of size $\sigma$ with $c_1< c_2< \ldots < c_\sigma$, where $<$ denotes the standard lexicographic order. 
	 
	Given a text $T=t_1 t_2 \cdots t_n \in \Sigma^*$ we denote by $|T|$ its length $n$. 
	We use $\epsilon$ to denote the empty string.
	A \emph{factor} (or \emph{substring}) of $T$ is written as $T[i,j] = t_i \cdots t_j$ with $1\leq i \leq j \leq n$.  
	
	A \emph{right-maximal} substring $W$ of $T$ is a string for which there exist at least two distinct characters $a,b$ such that $Wa$ and $Wb$ occur in $T$. 
	
	With $\mathcal C = \{T_1, \dots, T_m\}$ we denote a string collection of total length $n$, where each $T_i$ is terminated by a character $\#$ (the terminator) lexicographically smaller than all other alphabet's characters. In particular, a collection is an ordered multiset, and we denote $\mathcal C[i] = T_i$.
	
	The \emph{working space} of an algorithm is the total space used during computation in addition to the input and the output. We moreover assume that input and output are re-writable. 
	
	The \emph{generalized suffix array} $GSA[1..n]$ (see~\cite{Shi:1996,CGRS_JDA_2016,Louza2017}) of  $\mathcal C$ is an array of pairs $GSA[i] = \langle j,k \rangle$ such that $\mathcal C[j][k..]$ is the $i$-th lexicographically smallest suffix of strings in $\mathcal C$, where we break ties by input position (i.e. $j$ in the notation above).
	
	We denote by $\mathtt{range(W)} = \langle \mathtt{left(W)}, \mathtt{right(W)} \rangle$ the maximal pair $\langle L,R \rangle$ such that all suffixes in $GSA[L,R]$ are prefixed by $W$. 
	Note that the number of suffixes lexicographically smaller than $W$ in the collection is $L-1$. 
	We extend this definition also to cases where $W$ is not present in the collection: in this case, the (empty) range is $\langle L, L-1\rangle$ and we still require that $L-1$ is the number of suffixes lexicographically smaller than $W$ in the collection. 
	 
	The \emph{extended Burrows-Wheeler Transform}
	$BWT[1..n]$ \cite{MantaciRRS07,BauerCoxRosoneTCS2013} of $\mathcal C$ is the character array defined as $BWT[i] = \mathcal C[j][k-1\ \mathtt{mod}\ |\mathcal C[j]|]$, where $\langle j,k \rangle =  GSA[i]$. 
	
	The \emph{longest common prefix} (LCP) array of a collection $\mathcal C$ of strings (see \cite{CGRS_JDA_2016,Louza2017,egidiLouzaManziniTellesWABI2018}) is an array storing the length of the longest common prefixes between two consecutive suffixes of $\mathcal C$ in lexicographic order (with $LCP[1]=0$). 
	
	Given two collections $\mathcal C_1, \mathcal C_2$ of total length $n$, the Document Array of their union is the binary array $DA[1..n]$ such that $DA[i] = 0$ if and only if the $i$-th smallest suffix comes from $\mathcal C_1$. When merging suffixes of the two collections, ties are broken by collection number (i.e. suffixes of $\mathcal C_1$ are smaller than suffixes of $\mathcal C_2$ in case of ties).
	
	$S.rank_c(i)$ is the number of characters equal to $c$ in $S[1,i-1]$. 
	
	The $C$-array of a string (or collection) $S$ is an array $C[1..\sigma]$ such that $C[i]$ contains the number of characters lexicographically smaller than $i$ in $S$, plus one ($S$ will be clear from the context). 
	Alternatively, $C[c]$ is the starting position of suffixes starting with $c$ in the suffix array of the string. 
	When $S$ (or any of its permutations) is represented with a balanced wavelet tree, then we do not need to store explicitly $C$, and $C[c]$ can be computed in $O(\log\sigma)$ time with no space overhead on top of the wavelet tree~\cite{navarro2012wavelet}. In the rest of the paper we assume $C$ is accessed in this way. 
	
	Function $\mathtt{getIntervals(L,R,BWT)}$, where $BWT$ is the Burrows-Wheeler transform of a string collection and $\langle L,R\rangle$ is the suffix array interval of some string $W$ appearing in the collection, returns all suffix array intervals of strings $cW$, with $c\neq \#$, that occur in the collection. 
	When $BWT$ is represented with a balanced wavelet tree, we can implement this function so that it terminates in $O(\log\sigma)$ time per returned interval~\cite{beller2013computing}. Importantly, the function can be made to return the output intervals on-the-fly, one by one (in an arbitrary order), without the need to store them all in an auxiliary vector, with just $O(\log n)$ bits of additional overhead in space~\cite{beller2013computing} (essentially, this requires to DFS-visit the sub-tree of the wavelet tree induced by $BWT[L,R]$; the visit requires only $\log\sigma$ bits to store the current path in the tree).
	
	We note that an extension of the above function that navigates in parallel two BWTs is immediate. Function $\mathtt{getIntervals(L_1,R_1,L_2, R_2, BWT_1, BWT_2)}$ takes as input two ranges of a string $W$ on the BWTs of two collections, and returns the pairs of ranges on the two BWTs corresponding to all left-extensions $cW$ of $W$ ($c\neq \#$) such that $cW$ appears in at least one of the two collections. To implement this function, it is sufficient to navigate in parallel the two wavelet trees as long as at least one of the two intervals is not empty.
	
	The function $S.\mathtt{range\_distinct(i,j)}$ returns the set of distinct alphabet characters \emph{different than the terminator} $\#$ in $S[i,j]$. Also this function can be implemented in $O(\log\sigma)$ time per returned element when $S$ is represented with a wavelet tree (again, this requires a DFS-visit of the sub-tree of the wavelet tree induced by $S[i,j]$).
	
	$BWT.\mathtt{bwsearch(\langle L,R \rangle, c)}$ is the procedure that, given the suffix array interval $\langle L,R \rangle$ of a string $W$, returns the suffix array interval of $cW$ by using the BWT~\cite{ferragina2000opportunistic}. This function requires access to array $C$ and \emph{rank} support on $BWT$ (read above), and runs in $O(\log\sigma)$ time when $BWT$ is represented with a balanced wavelet tree. 
	
	\section{Notes on Belazzougui's Algorithm}\label{app:belazzougui}
	
	As discussed in the main paper, the enumeration algorithm works by visiting the Weiner link tree of the text.
	While this guarantees that we will visit all and only the suffix tree's explicit nodes (for all details, see~\cite{belazzougui2014linear}), there are two main issues that need to be addressed. First, the stack's size may grow in an uncontrolled way. The solution to this problem is simple: once computed $\mathtt{repr(cW)}$ for the right-maximal left-extensions $cW$ of $W$, we push them on the stack in decreasing order of range length $range(cW)$ (i.e. the node with the smallest range is pushed last). This guarantees that the stack will always contain at most $O(\sigma\log n)$ elements~\cite{belazzougui2014linear}. Since each element takes $O(\sigma\log n)$ bits to be represented, the stack's size never exceeds $O(\sigma^2\log^2 n)$ bits. 
	
	The second issue that needs to be addressed is how to efficiently compute $\mathtt{repr(cW)}$ from $\mathtt{repr(W)}$ for the characters $c$ such that $cW$ is right-maximal in $T$. In~\cite{belazzougui2014linear,belazzougui2016linear}  this operation is supported efficiently by first enumerating all \emph{distinct} characters in each range $BWT[\mathtt{first_W[i]..  first_W[i+1]}]$ for $i=1, \dots, k_W$. 
	Using the notation of \cite{belazzougui2014linear}, let us call $\mathtt{rangeDistinct(i,j)}$ the operation that returns all distinct characters in $BWT[i,j]$.
	Equivalently, for each $a\in \mathtt{chars_W}$ we want to list all distinct left-extensions $cWa$ of $Wa$.
	Note that, in this way, we may also visit implicit suffix tree nodes (i.e. some of these left-extensions could be not right-maximal).
	Stated otherwise, we are traversing all explicit \emph{and} implicit Weiner links. Since the number of such links is linear~\cite{belazzougui2014linear,belazzougui2014alphabet} (even including implicit Weiner links\footnote{To see this, first note that the number of right-extensions $Wa$ of $W$ that have only one left-extension $cWa$ is at most equal to the number of right-extensions of $W$; globally, this is at most the number of suffix tree's nodes (linear). Any other right-extension $Wa$ that has at least two distinct left-extensions $cWa$ and $bWa$ is, by definition, left maximal and corresponds therefore to a node in the suffix tree of the reverse of $T$. It follows that all left-extensions of $Wa$ can be charged to an edge of the suffix tree of the reverse of $T$ (again, the number of such edges is linear).}), globally the number of distinct characters returned by $\mathtt{rangeDistinct}$ operations is $O(n)$. An implementation of $\mathtt{rangeDistinct}$ on wavelet trees is discussed in \cite{beller2013computing} with the procedure \texttt{getIntervals} (this procedure actually returns more information: the suffix array range of each $cWa$). This implementation runs in $O(\log\sigma)$ time per returned character. Globally, we therefore spend $O(n\log\sigma)$ time using a wavelet tree. At this point, we need to compute the representation $\mathtt{repr(cW)}$ for all left-extensions of $W$ and keep only the right-maximal ones. Letting $x=\mathtt{repr(W)}$, we call $\mathtt{BWT.Weiner(x)}$ the function that returns the representations of such strings (this function will be used in Line \ref{range distinct2} of Algorithm \ref{alg:fill nodes}).
	This function can be implemented by observing that 
	$$
	\mathtt{range(cWa) = \langle\ C[c] + BWT.rank_c(left(Wa)), C[c] + BWT.rank_c(right(Wa)+1)-1 \ \rangle} 
	$$
	where $a=\mathtt{chars_W[i]}$ for $1\leq i < |\mathtt{first_W}|$, and noting that $\mathtt{left(Wa)}$ and $\mathtt{right(Wa)}$ are available in $\mathtt{repr(W)}$. Note also that we do not actually need to know the value of characters $\mathtt{chars_W[i]}$ to compute the ranges of each $cW\cdot \mathtt{chars_W[i]}$; this is the reason why we can omit $\mathtt{chars_W}$ from $\mathtt{repr(W)}$.
	Using a wavelet tree, the above operation takes $O(\log\sigma)$ time. By the above observations, the number of strings $cWa$ such that $W$ is right-maximal is bounded by $O(n)$. Overall, computing $\mathtt{repr(cW)} = \langle \mathtt{first_{cW}}, |W|+1 \rangle$ for all left-extensions $cW$ of all right-maximal strings $W$ takes therefore $O(n\log\sigma)$ time. Within the same running time, we can check which of those extensions is right maximal (i.e. those such that $|\mathtt{first_{cW}}|\geq 3$), sort them by interval length (we always sort at most $\sigma$ node representations, therefore also sorting takes globally $O(n\log\sigma)$ time), and push them on the stack.
	
	\section{Notes on Beller et al.'s Algorithm}\label{app:beller}
	
	\textbf{Time complexity} It is easy to see that the algorithm inserts in total a linear number of intervals in the queue since an interval $\langle L_i, R_i, \ell+1 \rangle$ is inserted only if $LCP[R_i+1] = \bot$, and successively $LCP[R_i+1]$ is set to a value different than $\bot$. Clearly, this can happen at most $n$ times. 
	In~\cite{beller2013computing} the authors moreover show that, even counting the left-extensions of those intervals (that we compute after popping each interval from the queue), the total number of computed intervals stays linear.
	Overall, the algorithm runs therefore in $O(n\log\sigma)$ time (as discussed in the Appendix \ref{sec:notation}, $\mathtt{getIntervals}$ runs in $O(\log\sigma)$ time per returned element, see also Appendix~\ref{beller et al.}).
	
	\textbf{Queue implementation} To limit space usage, Beller et al. use the following queue representations. First note that, at each time point, the queue's triples are partitioned in a (possibly empty) sequence with associated LCP value (i.e. the third element in the triples) $\ell+1$, followed by a sequence with associated LCP value $\ell$, for some $\ell$. We can therefore store the two sequences (with associated LCP value) independently, and there is no need to store the LCP values in the triples themselves (i.e. the queue's elements become just ranges). Note also that we pop elements from the sequence with associated LCP value $\ell$, and push elements in the sequence with associated LCP value $\ell+1$. 
	When the former sequence is empty, we create a new sequence with associated LCP value $\ell+2$ and start popping from the sequence with associated LCP value $\ell+1$ (and so on).
	Beller et al. represent each of the two sequences separately as follows. While inserting elements in a sequence, as long as the sequence's length does not exceed $n/\log n$ we represent it as a vector of pairs (of total size at most $O(n)$ bits). This representation supports push/pop operations in (amortized) constant time. 
	As soon as the sequence's length exceeds $n/\log n$, we switch to a representation that uses two packed bitvectors of length $n$ storing, respectively, the left- and right-most boundaries of the ranges in the sequence. 
	Note that this representation can be used because the sequence of intervals corresponds to suffix array ranges of strings of some fixed length $\ell$, therefore there cannot be overlapping intervals.
	Pushing an interval in this new queue's representation takes constant time. Popping all the $t$ intervals from one of the two sequences, on the other hand, can be implemented in $O(t+ n/\log n)$ time by scanning the bitvectors (this requires using simple bitwise operations on words, see~\cite{beller2013computing} for all details). Since at  most $O(\log n)$ sequences will exceed size $n/\log n$, overall pop operations take amortized constant time.

	\section{Proofs}\label{app:proof}

	\begin{lemma}\label{lemma:proof of thm1}
		Algorithms \ref{alg:fill nodes} and \ref{alg:fill leaves} correctly compute the LCP array of the collection in $O(n\log\sigma)$ time using $o(n\log\sigma)$ bits of working space on top of the input and output.
	\end{lemma}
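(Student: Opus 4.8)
The plan is to split the argument into correctness, running time, and working space, and to ground correctness in the partition of the entries $LCP[i]$ ($i>1$) into \emph{node-type} and \emph{leaf-type} introduced before the lemma: these two classes are disjoint and exhaustive, while $LCP[0]$ is set explicitly in Line~\ref{LCP[0]}. It therefore suffices to show that Algorithm~\ref{alg:fill nodes} writes exactly the node-type entries (with the right values) and Algorithm~\ref{alg:fill leaves} the leaf-type ones. For the node-type entries I would treat the two branches separately. When $\sigma>\sqrt n/\log^2 n$ I would invoke the correctness of Beller et al.'s procedure (Theorem~\ref{th:Beller}), checking that the collection-specific initialisation --- setting $LCP[C[c]]\leftarrow 0$, seeding the queue with the single-character ranges, and never left-extending by $\#$ --- makes it set precisely the suffix-array positions separating adjacent children of suffix-tree nodes. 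When $\sigma\le\sqrt n/\log^2 n$ I would use that the Weiner-link traversal of Section~\ref{sec:belazzougui} visits every right-maximal substring $W$, and that for such $W$ the positions $\mathtt{first_W[2]},\dots,\mathtt{first_W[t]}$ are exactly the boundaries between consecutive children of the node $W$; the two suffixes straddling such a boundary share the prefix $W$ and differ on the next symbol, so their longest common prefix is $|W|=\ell$, which is what Line~\ref{LCP in Node} writes. The structural fact I would make explicit is a bijection between node-type positions and (node, internal-boundary) pairs: a node-type $i$ determines the longest common prefix $W$ of the suffixes at $i-1$ and $i$, this $W$ is right-maximal, and $i$ is one of its internal boundaries; hence each node-type entry is set once and correctly.

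For the leaf-type entries I would show that Algorithm~\ref{alg:fill leaves} enumerates exactly the ranges of the strings $W\#$. Every string reached from $\mathtt{range(\#)}$ by left-extensions with non-$\#$ symbols ends in $\#$ and is thus a genuine suffix, and conversely a short induction on $|W|$ shows every occurring $W\#$ is reached. Distinct $W\#$ have disjoint ranges and every suffix-array position lies in the range of its own suffix, so these ranges partition $[1,n]$; inside a range $\langle L,R\rangle$ all suffixes equal $W\#$, whence $LCP[L+1..R]$ are precisely the leaf-type entries and all equal $|W|=\ell$, exactly as Line~\ref{LCP in Leaves} writes (the boundary $LCP[L]$ is left untouched, consistently with its being node-type).

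The running time is $O(n\log\sigma)$ in every phase: the two node-type branches inherit it from Theorems~\ref{th:Beller} and~\ref{th:Belazzougui} (Appendices~\ref{app:beller}--\ref{app:belazzougui}), while in Algorithm~\ref{alg:fill leaves} the number of ranges is at most the number of distinct suffixes ($\le n$), each expanded by one $\mathtt{getIntervals}$ call at $O(\log\sigma)$ per returned interval, and the inner loops cost $O(n)$ in total because the ranges partition $[1,n]$. For the working space I would bound the single auxiliary container in each branch, using that the thresholds force $\log\sigma=\Theta(\log n)$ on the large-$\sigma$ side: the Beller-style queues take $O(n)=o(n\log\sigma)$ bits there, the node stack takes $O(\sigma^2\log^2 n)=o(n)$ bits when $\sigma\le\sqrt n/\log^2 n$, and the leaf stack holds $O(\sigma\log n)$ elements of $O(\log n)$ bits, i.e. $O(\sigma\log^2 n)=O(n/\log n)=o(n)$ bits when $\sigma\le n/\log^3 n$; since $\log\sigma\ge 1$, all of these are $o(n\log\sigma)$.

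I expect the main obstacle to be the $O(\sigma\log n)$ bound on the number of elements in the leaf stack, as it does not follow verbatim from Appendix~\ref{app:belazzougui}: the tree of strings $W\#$ may contain long non-branching paths (e.g. for $a^n\#$), so the range need not halve at every descent step. The argument I would give is that a non-branching node puts no sibling on the stack, while at a branching node the descent goes into the smallest child, whose range is at most half the parent's; since $\sum_c|\mathtt{range}(cW\#)|\le|\mathtt{range}(W\#)|$, the product of the branching factors along any descent is at most $n$, so a descent meets at most $\log_2 n$ branching nodes, each contributing fewer than $\sigma$ stacked siblings. Verifying that this adaptation of Belazzougui's counting is valid for the leaf tree is the delicate point; the remaining bounds are routine once the thresholds are used to control $\log\sigma$.
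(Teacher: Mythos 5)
Your proposal follows essentially the same route as the paper's proof: the same node-type/leaf-type decomposition with $LCP[0]$ handled separately, the same two-branch analysis of Algorithm~1 (generalized Beller et al.\ for $\sigma>\sqrt n/\log^2 n$, Belazzougui's enumeration with the boundaries $\mathtt{first_W[2..t]}$ otherwise), the same partition-of-$[1,n]$ argument for the ranges of the strings $W\#$ in Algorithm~2, and the same time/space accounting driven by the alphabet thresholds. On the one point where you go beyond the paper---justifying the $O(\sigma\log n)$ element bound for the leaf stack, which the paper simply inherits from Belazzougui's argument---your conclusion is correct, but the clean invariant is not that the exploration always sits inside the smallest child (at a generic moment the current node may be a larger sibling); rather, elements are popped smallest-first, so any group that still holds an element on the stack certifies that the element whose popping spawned the group above had at most half its parent's range, and this argument carries over verbatim to the leaf tree because the children's ranges are disjoint and sum to at most the parent's range.
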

	\begin{proof}
		
		\textbf{Correctness and completeness - Algorithm \ref{alg:fill nodes}}. We start by proving correctness and completeness of Beller et al.'s procedure in Line \ref{beller et al.} of Algorithm \ref{alg:fill nodes} (procedure \texttt{BGOS(BWT,LCP)}). The proof that the algorithm fills all the node-type LCP entries correctly proceeds by induction on the LCP value $\ell$ and follows the original proof of~\cite{beller2013computing}.
		At the beginning, we insert in the queue all $c$-intervals, for $c\in\Sigma$. For each such interval $\langle L,R \rangle$ we set $LCP[R+1]=\ell = 0$. It is easy to see that after this step all and only the node-type LCP values equal to 0 are correctly set. 
		Assume, by induction, that all node-type LCP values less than or equal to $\ell$ have been correctly set, and we are about to extract from the queue the first triple $\langle L,R,\ell+1 \rangle$ having length $\ell+1$. For each extracted triple with length $\ell+1$ associated to a string $W$, consider the triple $\langle L',R',\ell+2 \rangle$ associated to one of its left-extensions $cW$. If $LCP[R'+1] \neq \bot$, then we have nothing to do. However, if $LCP[R'+1] = \bot$, then it must be the cases that (i) the value to write in this cell satisfies $LCP[R'+1] \geq \ell+1$, since by induction we have already filled all node-type LCP values smaller than or equal to $\ell$, and (ii) $LCP[R'+1]$ is of node-type, since otherwise the BWT interval of $cW$ would also include position $R'+1$. On the other hand, it cannot be the case that $LCP[R'+1] > \ell+1$ since otherwise the $cW$-interval would include position $R'+1$. We therefore conclude that $LCP[R'+1] = \ell+1$, so the algorithm correctly sets this node-type LCP value.

		The above argument settles correctness; to prove completeness, assume that, at some point, $LCP[i] = \bot$ and the correct value to be written in this cell is $\ell+1$. We want to show that we will pull a triple $\langle L,R,\ell+1 \rangle$ from the queue corresponding to a string $W$ (note that $\ell+1=|W|$ and, moreover, $W$ could end with $\#$) such that one of the left-extensions $aW$ of $W$ satisfies $\mathtt{range(aW)} = \langle L',i-1 \rangle$, for some $L'$. This will show that, at some point, we will set $LCP[i] \leftarrow \ell+1$. 
		We proceed by induction on $|W|$.
		Note that we separately set all LCP values equal to 0.
		The base case $|W|=1$ is easy: by the way we initialized the queue, $\langle \mathtt{range(c)}, 1\rangle$, for all $c\in\Sigma$, are the first triples we pop. Since we left-extend these ranges with all alphabet's characters except $\#$, it is easy to see that all LCP values equal to 1 are set. From now on we can therefore assume that we are setting LCP-values equal to $\ell+1>1$, i.e. $W=b\cdot V$, for $b\in\Sigma-\{\#\}$ and $V\in \Sigma^+$.
		Let $abV$ be 
		the length-$(\ell+2)$ left-extension of $W=bV$ 
		such that $\mathtt{right(abV)+1} = i$. Since, by our initial hypothesis, $\mathtt{LCP[i]} = \ell+1$, the collection contains also a suffix $aU$ lexicographically larger than $abV$ and such that $\mathtt{LCP(aU,abV)} = \ell+1$. But then, it must be the case that $\mathtt{LCP(right(bV)+1)} = \ell$ (it cannot be smaller by the existence of $U$ and it cannot be larger since $|bV|=\ell+1$). By inductive hypothesis, this value was set after popping a triple $\langle L'', R'', \ell\rangle$ corresponding to string $V$, left-extending $V$ with $b$, and pushing $\langle \mathtt{range(bV)}, \ell+1 \rangle$ in the queue. This completes the completeness proof since we showed that $\langle \mathtt{range(bV)}, \ell+1 \rangle$ is in the queue, so sooner or later we will pop it, extend it with $a$, and set $\mathtt{LCP[right(abV)+1] = LCP[i]} \leftarrow \ell+1$.

		If the queue uses too much space, then Algorithm \ref{alg:fill nodes} switches to a stack and Lines \ref{new stack2}-\ref{push4} are executed instead of Line \ref{beller et al.}. Note that this pseudocode fragment corresponds to Belazzougui's enumeration algorithm, except that now we also set LCP values in Line \ref{LCP in Node}. By the enumeration procedure's correctness, we have that, in Line \ref{LCP in Node}, $\langle \mathtt{first_W[1]}, \mathtt{first_W[t+1]} \rangle$ is the SA-range of a right-maximal string $W$ with $\ell = |W|$, and $\mathtt{first_W[i]}$ is the first position of the SA-range of $Wc_i$, with $i=1,\dots,t$, where $c_1, \dots, c_2$ are all the (sorted) right-extensions of $W$. Then, clearly each LCP value in Line \ref{LCP in Node} is of node-type and has value $\ell$, since it is the LCP between two strings prefixed by $W\cdot \mathtt{chars_W[i-1]}$ and $W\cdot \mathtt{chars_W[i]}$. Similarly,  completeness of the procedure follows from the completeness of the enumeration algorithm. Let $LCP[i]$ be of node-type. Consider the prefix $Wb$ of length $LCP[i]+1$ of the $i$-th suffix in the lexicographic ordering of all strings' suffixes. Since $LCP[i] = |W|$, the $(i-1)$-th suffix is of the form $Wa$, with $b\neq a$, and $W$ is right-maximal. But then, at some point our enumeration algorithm will visit the representation of $W$, with $|W|=\ell$. Since $i$ is the first position of the range of $Wb$, we have that $i= \mathtt{first_W[j]}$ for some $j \geq 2$, and Line \ref{LCP in Node} correctly sets $LCP[\mathtt{first_W[j]}] = LCP[i] \leftarrow \ell = |W|$.
		
		\textbf{Correctness and completeness - Algorithm \ref{alg:fill leaves}}. Proving correctness and completeness of this procedure is much easier. It is sufficient to note that the \texttt{while} loop iterates over all ranges $\langle L,R \rangle$ of strings ending with $\#$ and not containing $\#$ anywhere else (note that we start from the range of $\#$ and we proceed by recursively left-extending this range with symbols different than $\#$). Then, for each such range we set $LCP[L+1,R]$ to $\ell$, the string depth of the corresponding string (excluding the final character $\#$). It is easy to see that each leaf-type LCP value is correctly set in this way.
		
		\textbf{Complexity - Algorithm \ref{alg:fill nodes}}  If $\sigma > \sqrt n/\log^2 n$, then we run Beller et al's algorithm, which terminates in $O(n\log\sigma)$ time and uses $O(n) = o(n\log\sigma)$ bits of additional working space. Otherwise, we perform a linear number of operations on the stack since, as observed in Section \ref{sec:belazzougui}, the number of Weiner links is linear. By the same analysis of Section \ref{sec:belazzougui}, the operation in Line \ref{range distinct2} takes $O(k\log\sigma)$ amortized time on wavelet trees, and sorting in Line \ref{sort2} (using any comparison-sorting algorithm sorting $m$ integers in $O(m\log m)$ time) takes $O(k\log\sigma)$ time. Note that in this sorting step we can afford storing in temporary space nodes $x_1, \dots, x_k$ since this takes additional space $O(k\sigma\log n) = O(\sigma^2\log n) = O(n/\log^3n) = o(n)$ bits. 
		Overall, all these operations sum up to $O(n\log\sigma)$ time. Since the stack always takes at most $O(\sigma^2\log^2n)$ bits and $\sigma \leq \sqrt n/\log^2 n$, the stack's size never exceeds $O(n/\log^2n) = o(n)$ bits.
		
		\textbf{Complexity - Algorithm \ref{alg:fill leaves}} Note that, in the \texttt{while} loop, we start from the interval of $\#$ and recursively left-extend with characters different than $\#$ until this is possible. It follows that we visit the intervals of all strings of the form $W\#$ such that $\#$ does not appear inside $W$. Since these intervals form a cover of $[1,n]$, their number (and therefore the number of iterations in the \texttt{while} loop) is also bounded by $n$. This is also the maximum number of operations performed on the queue/stack. Using Beller et al.'s implementation for the queue and a simple vector for the stack, each operation takes constant amortized time. Operating on the stack/queue takes therefore overall $O(n)$ time.
		For each interval $\langle L,R \rangle$ popped from the queue/stack, in Line \ref{LCP in Leaves} we set $R-L-2$ LCP values. As observed above, these intervals form a cover of $[1,n]$ and therefore Line \ref{LCP in Leaves} is executed no more than $n$ times. Line \ref{getIntervals1} takes time $O(k\log\sigma)$. Finally, in Line \ref{sort1} we sort at most $\sigma$ intervals. Using any fast comparison-based sorting algorithm, this costs overall at most $O(n\log\sigma)$ time. 
		
		As far as the space usage of Algorithm \ref{alg:fill leaves} is concerned, note that we always push just pairs interval/length ($O(\log n)$ bits) in the queue/stack. If $\sigma > n/\log^3n$, we use Beller et al.'s queue, taking at most $O(n) = o(n\log\sigma)$ bits of space. Otherwise, the stack's size never exceeds $O(\sigma\cdot \log n)$ elements, with each element taking $O(\log n)$ bits. This amounts to $O(\sigma\cdot \log^2 n) = O(n/\log n) = o(n)$ bits of space usage. Moreover, in Lines \ref{getIntervals1}-\ref{sort1} it holds $\sigma\leq n/\log^3n$ so we can afford storing temporarily all intervals returned by $\mathtt{getIntervals}$ in $O(k\log n) = O(\sigma\log n) = O(n/\log^2n) = o(n)$ bits.
	\end{proof}
	
	\section{Merging BWTs}\label{app:merge}
	
	\begin{algorithm}[th!]
		\caption{$\mathtt{Merge(BWT_1,BWT_2, DA)}$}
		\label{alg:merge}
		
		\SetKwInOut{Input}{input}
		\SetKwInOut{Output}{behavior}
		\SetSideCommentLeft
		\LinesNumbered
		
		\Input{Wavelet trees of the Burrows-Wheeler transformed collections $\mathtt{BWT_1}\in [1,\sigma]^{n_1}$, $\mathtt{BWT_2}\in [1,\sigma]^{n_2}$ and empty document array $\mathtt{DA[1..n]}$, with $n=n_1+n_2$.}
		\Output{Computes Document Array $DA$.}
		\BlankLine	
		\BlankLine
		
		\eIf{$\sigma > n/\log^3n$}{
			
			$\mathtt P \leftarrow \mathtt{new\_queue()}$\tcc*[r]{Initialize new queue of interval pairs}\label{new queue3}
			
		}{
			
			$\mathtt P \leftarrow \mathtt{new\_stack()}$\tcc*[r]{Initialize new stack of interval pairs}\label{new stack3}
			
		}
		
		\BlankLine
		
		$\mathtt P\mathtt{.push(BWT_1.range(\#),BWT_2.range(\#))}$\tcc*[r]{Push SA-ranges of terminator}\label{push5}
		
		\BlankLine
		
		\While{$\mathtt{\mathbf{not}\ P.empty()}$}{\label{while3}
			\BlankLine
			
			$\langle L_1,R_1, L_2, R_2 \rangle \leftarrow \mathtt{P.pop()}$\tcc*[r]{Pop highest-priority element}\label{pop3}	
			
			\BlankLine
			
			\For{$i=L_1+L_2-1\dots L_2+R_1-1$}{
				$\mathtt{DA}[i] \leftarrow 0$\tcc*[r]{Suffixes from $\mathcal C_1$}\label{DA1}
			}	
			\For{$i=L_2+R_1\dots R_1+R_2$}{
				$\mathtt{DA}[i] \leftarrow 1$\tcc*[r]{Suffixes from $\mathcal C_2$}\label{DA2}
			}	
			
			\BlankLine

			\eIf{$\sigma > n/\log^3n$}{
				
				$\mathtt{P.push(getIntervals(L_1, R_1, L_2, R_2, BWT_1, BWT_2))}$\tcc*[r]{New intervals}\label{push8}

			}{
				
				$c_1^1, \dots, c_{k_1}^1 \leftarrow \mathtt{BWT_1.range\_distinct(L_1,R_1)}$\;\label{range distinct3}
				$c_1^2, \dots, c_{k_2}^2 \leftarrow \mathtt{BWT_2.range\_distinct(L_2,R_2)}$\;\label{range distinct4}
				$\{c_1\dots c_k\} \leftarrow \{c_1^1, \dots, c_{k_1}^1\} \cup \{c_1^2, \dots, c_{k_2}^2\}$\;\label{range distinct5}
				
				\BlankLine
				
				\For{$i=1\dots k$}{
					$\langle L_1^i, R_1^i\rangle \leftarrow \mathtt{BWT_1.bwsearch(\langle L_1, R_1\rangle, c_i)}$\tcc*[r]{Backward search step}\label{BWS2}
				}
				
				\For{$i=1\dots k$}{
					$\langle L_2^i, R_2^i\rangle \leftarrow \mathtt{BWT_2.bwsearch(\langle L_2, R_2\rangle, c_i)}$\tcc*[r]{Backward search step}\label{BWS3}
				}

				\BlankLine
				
				$\langle \hat L_1^i, \hat R_1^i, \hat L_2^i, \hat R_2^i, \rangle_{i=1, \dots, k} \leftarrow \mathtt{sort}(\langle L_1^i, R_1^i, L_2^i, R_2^i, \rangle_{i=1, \dots, k})$\;\label{sort3}
				
				\BlankLine
				
				\For{$i=k\dots 1$}{
					
					$\mathtt{P.push}(\hat L_1^i, \hat R_1^i, \hat L_2^i, \hat R_2^i)$\tcc*[r]{Push in order of decreasing length}\label{push6}
					
				}
		
			}

		}
	\end{algorithm}
	
	Note that Algorithm \ref{alg:merge} is similar to Algorithm \ref{alg:fill leaves}, except that now we manipulate pairs of intervals. In Line \ref{sort3}, we sort quadruples according to the length $R_1^i + R_2^i - (L_1^i + L_2^i) +2$ of the combined interval on $BWT(\mathcal C_1\cup \mathcal C_2)$. Finally, note that Backward search can be performed correctly also when the input interval is empty: $\mathtt{BWT_j.bwsearch(\langle L_j, L_j-1 \rangle, c)}$, where $L_j-1$ is the number of suffixes in $\mathcal C_j$ smaller than some string $W$, correctly returns the pair $\langle L', R'\rangle$ such that $L'$ is the number of suffixes in $\mathcal C_j$ smaller than $cW$: this is true when implementing backward search with a $rank_c$ operation on position $L_j$; then, if the original interval is empty we just set $R'=L'-1$ to keep the invariant that $R'-L'+1$ is the interval's length. 

	\section{Enumerating Suffix Tree Intervals in Succinct Space}\label{sec:ST}
	
	We note that Algorithm \ref{alg:fill nodes} can be used to enumerate suffix tree intervals using just $o(n\log\sigma)$ space on top of the input BWT of a single text, when this is represented with a wavelet tree. This is true by definition in Belazzougui's procedure (Lines \ref{new stack2}-\ref{push4}), but a closer look reveals that also Beller et al's procedure (Line \ref{beller et al.}) enumerates suffix tree intervals. At each step, we pop from the queue an element $\langle\langle L,R \rangle, |W| \rangle$ with $\langle L,R \rangle = \mathtt{range(W)}$ for some string $W$, left-extend the range with all $a\in \mathtt{BWT.rangeDistinct(L,R)}$, obtaining the ranges $\mathtt{range(aW)} = \langle L_a, R_a\rangle$ and, only if $LCP[R_a+1]=\bot$, set $LCP[R_a+1] \leftarrow |W|$ and push $\langle \langle L_a, R_a\rangle, |W|+1 \rangle$ on the stack. But then, since $LCP[R_a+1] = |W|$ we have that the $R_a$-th and $(R_a+1)$-th  smallest suffixes start, respectively, with $aUc$ and $aUd$ for some $c<d\in\Sigma$, where $W=Uc$. This implies that $aU$ is right-maximal, and the corresponding suffix tree node has at least two children labeled $c$ and $d$; in particular, $\langle L_a, R_a\rangle$ is the range of $aW = aUc$, that is, one of these two children. Since we assume that we are working with a single text, $\langle L_a, R_a\rangle$ is the range of a suffix tree node if and only if $R_a>L_a$: in this case, we return this range. 
	Completeness of the visit (i.e. we return all suffix tree nodes' intervals) follows from the completeness of the LCP-array construction procedure (i.e. we fill all LCP values).
	To conclude note that, to perform our visit, we are using the array $LCP$ to store null/non-null entries (i.e. $\bot$ or any other number); this shows that we do not actually need the LCP array: a bitvector of length $n = o(n\log\sigma)$ is sufficient (remember that Beller et al.'s strategy is used on large alphabets, so $n = o(n\log\sigma)$ holds).
	
\end{document}